\newtheorem{theorem}{Theorem}[section]
\newtheorem{lemma}[theorem]{Lemma}
\theoremstyle{definition}
\newtheorem{definition}[theorem]{Definition}
\theoremstyle{remark}
\newtheorem{remark}[theorem]{Remark}
\newtheorem{corollary}[theorem]{Corollary}
\numberwithin{equation}{section}
\begin{document}
\title[Exact Digit-by-Digit Root Algorithms]
{Exact Constructive Digit-by-Digit Algorithms for Integer $e$-th Root Extraction}

\author{Suresan Pareth}
\address{Department of Information Technology, National Institute of Technology Karnataka, Surathkal Mangalore, India}
\email{al.sureshpareth.it@nitk.edu.in, sureshpareth@gmail.com}

\date{\today}

\subjclass[2020]{11A05, 11Y16, 68W30} 
\keywords{digit-by-digit algorithms, integer roots, perfect powers, constructive mathematics, Exact root extraction, Remainder invariant}

\begin{abstract}
   \begin{abstract}
       We present a unified constructive digit-by-digit framework for exact root
       extraction using only integer arithmetic. The core contribution is a
       complete correctness theory for the fractional square root algorithm,
       proving that each computed decimal digit is exact and final, together with
       a sharp truncation error bound of $10^{-k}$ after $k$ digits.

       We further develop an invariant-based framework for computing the integer
       $e$-th root $\lfloor N^{1/e} \rfloor$ of a non-negative integer $N$ for
       arbitrary fixed exponents $e \ge 2$, derived directly from the binomial
       theorem. This method generalizes the classical long-division square root
       algorithm, preserves a constructive remainder invariant throughout the
       computation, and provides an exact decision procedure for perfect $e$-th
       power detection.

       We also explain why exact digit-by-digit fractional extraction with
       non-revisable digits is structurally possible only for square roots
       ($e=2$), whereas higher-order roots ($e \ge 3$) exhibit nonlinear coupling
       that prevents digit stability under scaling.

       All proofs are carried out in a constructive, algorithmic manner consistent
       with Bishop-style constructive mathematics, yielding explicit algorithmic
       witnesses, decidable predicates, and guaranteed termination. The resulting
       algorithms require no division or floating-point operations and are well
       suited to symbolic computation, verified exact arithmetic, educational
       exposition, and digital hardware implementation.
   \end{abstract}

\end{abstract}

\maketitle



\section{Introduction}
\label{sec:introduction}

The digit-by-digit computation of integer roots is a classical technique rooted in elementary arithmetic. The familiar long-division algorithm for square roots, which processes pairs of digits and incrementally builds the root while preserving a non-negative remainder, admits a natural algebraic generalisation to higher-order roots via the binomial theorem. This paper develops a unified digit-by-digit framework for computing the integer \(e\)-th root \(\lfloor N^{1/e} \rfloor\) of a non-negative integer \(N\) for arbitrary fixed exponents \(e \ge 2\), using only exact integer arithmetic.

The method proceeds by grouping the decimal digits of \(N\) into blocks of size \(e\) (equivalently, working in base \(10^e\)) and constructing the root digit-by-digit in base \(10\). At each step, the binomial expansion of \((10R + x)^e - (10R)^e\) determines the largest admissible digit \(x \in \{0,\dots,9\}\) that does not exceed the current scaled remainder. This construction preserves a strong remainder invariant throughout the computation and ensures monotone growth of the partial root.

The primary contributions and significance of this work are as follows:
\begin{itemize}
    \item A general digit-by-digit algorithm (Algorithm~\ref{alg:perfect-power}) that computes \(\lfloor N^{1/e} \rfloor\) exactly for any fixed \(e \ge 2\), relying solely on additions, multiplications by constants, comparisons, and shifts; no division or floating-point operations are required.
    \item An exact and constructive decision procedure for determining whether \(N\) is a perfect \(e\)-th power, obtained as a direct corollary of the digit-by-digit root extraction process (Theorem~\ref{thm:perfect-power-detection}).
    \item A precise distinction between square roots (\(e=2\)), for which non-revisable fractional digit expansion with sharp error bounds is possible (Theorems~\ref{thm:exact-digits} and~\ref{thm:error-bound}), and higher-order roots (\(e \ge 3\)), for which such digit stability cannot be achieved within a purely digit-by-digit scheme (Corollary~\ref{cor:unique-fractional}).
    \item Rigorous inductive proofs establishing a remainder invariant (Lemma~\ref{lem:invariant-general}), optimal digit selection (Lemma~\ref{lem:digit-selection-general}), overall correctness (Theorem~\ref{thm:perfect-power-detection}), and termination (Theorem~\ref{thm:termination-large-e}).
\end{itemize}

\section{Related Work}
\label{sec:related-work}

Digit-by-digit root extraction has a long and well-documented history.
The pairing-of-digits method for square roots appears in ancient Indian
mathematics~\cite{plofker2009} and was later formalised in European
arithmetical texts~\cite{halsted1897}. Procedural extensions to cube
roots and higher specific orders are occasionally described in historical
and pedagogical sources, typically as specialised rules without a common
algebraic framework or formal invariant-based proofs.

In the modern computational literature, Knuth~\cite{knuth1997} presents
the classical digit-by-digit square root algorithm and briefly remarks
on extensions to higher roots, but does not develop a general framework
for arbitrary exponents nor provide inductive correctness arguments.
Brent and Zimmermann~\cite{brent2010} survey efficient multiple-precision
root computation methods, focusing primarily on asymptotic performance
and numerical efficiency rather than constructive digit-by-digit
structure.

Hardware-oriented arithmetic algorithms, including non-restoring
division and CORDIC(COordinate Rotation DIgital Computer)-based methods, are extensively studied in digital
arithmetic~\cite{ercegovac2004}. Correctness in this context is typically
established through engineering analysis and fixed-point error bounds,
rather than explicit mathematical invariants of the form developed in
the present work.

Constructive treatments of numerical algorithms are comparatively rare.
Bishop and Bridges~\cite{bishop1985} established the foundations of
constructive mathematics, emphasising explicit constructions and
algorithmic witnesses. While constructive analysis has addressed
approximation schemes, a systematic constructive treatment of
digit-by-digit integer $e$-th root extraction for arbitrary fixed
exponents, with explicit invariants and termination proofs, does not
appear to be available in the existing literature.

\medskip
\noindent
\textbf{Contribution of the present work.}
This paper provides a unified, invariant-based, and constructive
treatment of digit-by-digit integer $e$-th root extraction for arbitrary
fixed exponents. The framework consolidates classical ideas into a
single algebraic scheme, supplies complete inductive proofs of
correctness and termination, clarifies the structural distinction
between square roots and higher-order roots, and demonstrates suitability
for exact arithmetic implementations in software and hardware.
\section{Integer Square Root}
\label{sec:integer-sqrt}

\subsection{Problem Statement}

We begin by formalizing the notion of an integer square root.

\begin{definition}[Integer Square Root]
    \label{def:integer-square-root}
    Given a non-negative integer $N \in \mathbb{N}$, the \emph{integer square root}
    of $N$ is defined as
    \[
    \lfloor \sqrt{N} \rfloor
    \;=\;
    \max \{\, r \in \mathbb{N} \mid r^2 \le N \,\}.
    \]
\end{definition}

The objective is to compute \(\lfloor \sqrt{N} \rfloor\) using only integer arithmetic, without resorting to floating-point approximations or irrational quantities.

\subsection{Decimal Pair Decomposition}

The classical digit-by-digit square root algorithm relies on grouping decimal digits in pairs.

\begin{definition}[Decimal Pair Decomposition]
    \label{def:decimal-pair-decomposition}

    Any non-negative integer \(N\) admits a unique decomposition of the form
    \[
    N = \sum_{k=0}^{m-1} a_k \, 10^{2(m-1-k)},
    \quad a_k \in \{0,1,\dots,99\},
    \]
    where each coefficient \(a_k\) represents a block of two decimal digits.
\end{definition}

This decomposition corresponds to representing \(N\) in base \(10^2 = 100\) and provides the structural basis for the incremental construction of the square root.

\subsection{Algebraic Foundation}

The digit-selection step is governed by a simple algebraic identity.

\begin{lemma}[Square Expansion Identity]
    \label{lem:square-expansion}
    For any integers \(R\) and \(x\), the following identity holds:
    \[
    (10R + x)^2 = 100R^2 + 20Rx + x^2.
    \]
\end{lemma}

\begin{proof}
    The result follows immediately from direct polynomial expansion:
    \[
    (10R + x)^2 = (10R)^2 + 2(10R)x + x^2 = 100R^2 + 20Rx + x^2.
    \]
\end{proof}

Subtracting the square of the previously constructed prefix \((10R)^2\) yields
\[
(10R + x)^2 - (10R)^2 = (20R + x)\,x,
\]
which forms the core inequality used to determine the largest admissible digit \(x \in \{0,\dots,9\}\) at each iteration of the algorithm.

\begin{definition}[Partial Number from Highest Digit Pairs]
    \label{def:partial-N}
    Let $N$ be decomposed into pairs of decimal digits as in
    Definition~\ref{def:decimal-pair-decomposition},
    \[
    N = \sum_{j=0}^{m} a_j \, 10^{2j},
    \]
    where $a_j \in \{0, 1, \dots, 99\}$ and $a_m$ is the highest pair (padding with a
    leading zero if necessary).

    For $0 \le k \le m$, define $N_{\le k}$ as the integer formed by the highest
    $k+1$ pairs:
    \[
    N_{\le k} = \sum_{j=m-k}^{m} a_j \, 10^{2(j - (m-k))}.
    \]
    Equivalently,
    \[
    N_{\le k} = \left\lfloor \frac{N}{10^{2(m-k)}} \right\rfloor.
    \]
\end{definition}

\subsection{Recursive Construction and Invariant}

\label{subsec:recursive-invariant}

\begin{lemma}[Remainder Invariant]
    \label{lem:invariant}
    Let \(0 \le k \le m\). After processing the highest \(k+1\) digit pairs,
    \[
    N_{\le k} = R_k^2 + \Delta_k,
    \]
    where \(R_k\) is the partial root constructed so far and \(\Delta_k \ge 0\) is the corresponding remainder. Moreover,
    \[
    R_k^2 \le N_{\le k} < (R_k + 1)^2.
    \]
\end{lemma}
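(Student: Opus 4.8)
The plan is to argue by induction on $k$, using the recursive digit-selection rule for the inductive step and Lemma~\ref{lem:square-expansion} for the algebraic bookkeeping. For the base case $k = 0$ we have $N_{\le 0} = a_m$, a single two-digit block, and the algorithm picks $R_0$ to be the largest digit $x \in \{0,\dots,9\}$ with $x^2 \le a_m$, i.e. $R_0 = \lfloor\sqrt{a_m}\rfloor$. Setting $\Delta_0 = N_{\le 0} - R_0^2 \ge 0$ yields the identity, the lower bound $R_0^2 \le N_{\le 0}$ is immediate, and the upper bound $N_{\le 0} < (R_0+1)^2$ follows from maximality of $R_0$ when $R_0 \le 8$ and from $N_{\le 0} \le 99 < 100 = (R_0+1)^2$ when $R_0 = 9$.

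For the inductive step, suppose $N_{\le k} = R_k^2 + \Delta_k$ with $R_k^2 \le N_{\le k} < (R_k+1)^2$. Processing the next pair appends $a_{m-k-1}$, so (as one checks directly from Definition~\ref{def:partial-N}) $N_{\le k+1} = 100\,N_{\le k} + a_{m-k-1}$, and the new partial root has the form $R_{k+1} = 10R_k + x$ for a digit $x \in \{0,\dots,9\}$. By Lemma~\ref{lem:square-expansion},
\[
(10R_k + x)^2 = 100 R_k^2 + (20R_k + x)x,
\]
so the requirement $R_{k+1}^2 \le N_{\le k+1}$ is equivalent to $(20R_k + x)x \le 100\Delta_k + a_{m-k-1}$. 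The algorithm selects $x$ as the largest digit satisfying this inequality (the set is nonempty since $x = 0$ always works) and sets $\Delta_{k+1} = N_{\le k+1} - R_{k+1}^2 = 100\Delta_k + a_{m-k-1} - (20R_k+x)x$. Then $\Delta_{k+1} \ge 0$ by the choice of $x$, and the identity together with the lower bound $R_{k+1}^2 \le N_{\le k+1}$ holds by construction.

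It remains to prove $N_{\le k+1} < (R_{k+1}+1)^2$, and this is where the only real subtlety arises, namely the behaviour of the carry when $x = 9$. If $x \le 8$, then $x+1$ is still a legal digit, so by maximality of $x$ we have $(20R_k + (x+1))(x+1) > 100\Delta_k + a_{m-k-1}$, and re-applying Lemma~\ref{lem:square-expansion} gives $(10R_k + x + 1)^2 > 100R_k^2 + 100\Delta_k + a_{m-k-1} = N_{\le k+1}$, as desired. If instead $x = 9$, then $R_{k+1} + 1 = 10(R_k + 1)$ and maximality gives no information; instead we invoke the \emph{strict} inductive upper bound $N_{\le k} \le (R_k+1)^2 - 1$ together with $a_{m-k-1} \le 99$ to obtain
\[
N_{\le k+1} = 100\,N_{\le k} + a_{m-k-1} \le 100\left((R_k+1)^2 - 1\right) + 99 = 100(R_k+1)^2 - 1 < \left(10(R_k+1)\right)^2 = (R_{k+1}+1)^2.
\]
This closes the induction. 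The main obstacle is exactly the $x = 9$ case: digit maximality is vacuous there, so one must propagate the strict upper bound from the previous step and absorb the worst-case incoming block $a_{m-k-1} = 99$; everything else is routine substitution via the square-expansion identity.
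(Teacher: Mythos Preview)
Your proof is correct and follows the same inductive scheme as the paper's. In fact you are more careful at the boundary: the paper simply asserts from maximality of $x$ that $(20R_k+(x{+}1))(x{+}1)>\Delta_k'$ without separating the case $x=9$ (where $x{+}1=10$ was never a trial digit, so maximality in $\{0,\dots,9\}$ says nothing), whereas your argument for $x=9$ via the strict inductive bound $N_{\le k}\le (R_k{+}1)^2-1$ and $a_{m-k-1}\le 99$ is exactly the justification that step requires.
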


\begin{proof}
    We proceed by induction on \(k\).

    Base case (\(k = 0\)):
    Process only the highest pair \(a_m\). Set \(R_0 = \lfloor \sqrt{a_m} \rfloor\), the largest integer such that \(R_0^2 \le a_m\). Define \(\Delta_0 = a_m - R_0^2 \ge 0\).
    By Definition~\ref{def:partial-N}, \(N_{\le 0} = a_m = R_0^2 + \Delta_0\), and by definition of the floor function,
    \[
    R_0^2 \le a_m < (R_0 + 1)^2.
    \]

    Inductive hypothesis: Assume the invariant holds for some \(k \ge 0\):
    \[
    N_{\le k} = R_k^2 + \Delta_k, \quad \Delta_k \ge 0,
    \]
    and
    \[
    R_k^2 \le N_{\le k} < (R_k + 1)^2.
    \]

    Inductive step (to \(k+1\)):
    Bring down the next lower pair \(a_{m-k-1}\), forming the scaled remainder
    \[
    \Delta_k' = 100 \cdot \Delta_k + a_{m-k-1}.
    \]
    By Definition~\ref{def:partial-N},
    \[
    N_{\le k+1} = 100 \cdot N_{\le k} + a_{m-k-1} = 100 \cdot (R_k^2 + \Delta_k) + a_{m-k-1} = (10 R_k)^2 + \Delta_k'.
    \]

    Choose the largest digit \(x \in \{0, 1, \dots, 9\}\) such that
    \[
    (20 R_k + x) x \le \Delta_k'.
    \]
    Set
    \[
    R_{k+1} = 10 R_k + x, \quad \Delta_{k+1} = \Delta_k' - (20 R_k + x) x.
    \]
    Since \(x\) is feasible, \(\Delta_{k+1} \ge 0\).

    Using the square expansion identity, (Lemma~\ref{lem:square-expansion}),
    \[
    (10 R_k + x)^2 = 100 R_k^2 + (20 R_k + x) x,
    \]
    we obtain
    \begin{align*}
        N_{\le k+1} &= (10 R_k)^2 + \Delta_k' \\
        &= (10 R_k)^2 + (20 R_k + x) x + \Delta_{k+1} \\
        &= (10 R_k + x)^2 + \Delta_{k+1} \\
        &= R_{k+1}^2 + \Delta_{k+1}.
    \end{align*}
    Thus the equality part of the invariant holds.

    For the strict upper bound: since \(x\) is maximal,
    \[
    (20 R_k + (x+1))(x+1) > \Delta_k'.
    \]
    Hence
    \[
    (10 R_k + (x+1))^2 = 100 R_k^2 + (20 R_k + (x+1))(x+1) > 100 R_k^2 + \Delta_k' = N_{\le k+1}.
    \]
    So
    \[
    N_{\le k+1} < (R_{k+1} + 1)^2.
    \]
    Combined with \(\Delta_{k+1} \ge 0\), we have
    \[
    R_{k+1}^2 \le N_{\le k+1} < (R_{k+1} + 1)^2.
    \]

    This completes the induction.
\end{proof}
\subsection{Algorithm}
\begin{algorithm}[H]
    \caption{Digit-by-Digit Integer Square Root}
    \label{alg:isr}
    \begin{algorithmic}[1]
        \Require Integer \(N \ge 0\)
        \Ensure \(R = \lfloor \sqrt{N} \rfloor\)
        \State Split \(N\) into digit pairs \(a_0, a_1, \dots, a_m\)
        \State \(R \gets \lfloor \sqrt{a_0} \rfloor\)
        \State \(\Delta \gets a_0 - R^2\)
        \For{$k = 1$ to $m$}
        \State \(\Delta \gets 100\Delta + a_k\)
        \State Find largest \(x \in \{0,\dots,9\}\) such that \((20R + x)x \le \Delta\)
        \State \(\Delta \gets \Delta - (20R + x)x\)
        \State \(R \gets 10R + x\)
        \EndFor
        \State \Return \(R\)
    \end{algorithmic}
\end{algorithm}
\subsection{Main Correctness Theorem}
The correctness of the classical digit-by-digit square root algorithm,
formalised in Algorithm~\ref{alg:isr}, is established in
Theorem~\ref{thm:correctness-integer-sqrt}. This result provides a rigorous
foundation for computing the integer square root in the sense of
Definition~\ref{def:integer-square-root} and serves as the base case for the
general $e$-th root framework developed later in the paper.

\begin{theorem}[Correctness of the Integer Square Root Algorithm]
    \label{thm:correctness-integer-sqrt}
    Given a non-negative integer \(N\), the digit-by-digit algorithm (processing pairs of decimal digits from highest to lowest) computes an integer \(R\) such that
    \[
    R = \lfloor \sqrt{N} \rfloor.
    \]
    Moreover, if the final remainder \(\Delta = 0\), then \(N\) is a perfect square and \(R = \sqrt{N}\).
\end{theorem}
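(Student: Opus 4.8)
The plan is to run the induction already established in Lemma~\ref{lem:invariant} all the way to $k = m$, and then argue that the partial root at the final stage is exactly $\lfloor\sqrt N\rfloor$. The key observation linking the two is that $N_{\le m} = N$: by Definition~\ref{def:partial-N}, $N_{\le m}$ is the integer formed by \emph{all} digit pairs, i.e. $N_{\le m} = \lfloor N / 10^{0}\rfloor = N$. Algorithm~\ref{alg:isr} performs precisely the recursive step analysed in the inductive step of Lemma~\ref{lem:invariant} (the base case $R \gets \lfloor\sqrt{a_0}\rfloor$, $\Delta \gets a_0 - R^2$ matches the base case $k=0$, and lines 5--8 implement the scaled-remainder update and maximal-digit selection), so by Lemma~\ref{lem:invariant} the returned value $R = R_m$ satisfies
\[
R^2 \le N < (R+1)^2 .
\]

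Next I would convert this two-sided inequality into the extremal characterisation of Definition~\ref{def:integer-square-root}. From $R^2 \le N$ we get $R \in \{\, r \in \mathbb N \mid r^2 \le N \,\}$, so $R \le \max\{\, r \mid r^2 \le N\,\} = \lfloor\sqrt N\rfloor$. Conversely, if $r \in \mathbb N$ and $r^2 \le N$, then $r^2 \le N < (R+1)^2$, and since $t \mapsto t^2$ is strictly increasing on $\mathbb N$ this forces $r < R+1$, i.e. $r \le R$; taking the maximum over all such $r$ gives $\lfloor\sqrt N\rfloor \le R$. Combining the two inequalities yields $R = \lfloor\sqrt N\rfloor$, which is the first claim.

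For the second claim, suppose the final remainder $\Delta = \Delta_m = 0$. The equality part of the invariant gives $N = N_{\le m} = R_m^2 + \Delta_m = R^2$, so $N$ is a perfect square with $\sqrt N = R$ (the non-negative square root, which coincides with $R$ since $R \ge 0$). This completes the argument.

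I do not expect a genuine obstacle here: the theorem is essentially a corollary of Lemma~\ref{lem:invariant} once one notes $N_{\le m} = N$. The only points requiring a little care are (i) checking that the algorithm's initialisation and loop body are faithful instantiations of the lemma's base case and inductive step — in particular that the bookkeeping $\Delta \gets 100\Delta + a_k$ before the digit search matches $\Delta_k'$ in the lemma — and (ii) invoking strict monotonicity of squaring on $\mathbb N$ to pass from $N < (R+1)^2$ to the upper bound $\lfloor\sqrt N\rfloor \le R$. Neither is deep; they are the routine glue between the invariant and the definition of the integer square root. A remark that termination is immediate, since the loop runs over the fixed finite index set $k = 1,\dots,m$ and the inner digit search ranges over $\{0,\dots,9\}$, rounds out the exposition.
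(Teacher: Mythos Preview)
Your proposal is correct and follows essentially the same approach as the paper: invoke Lemma~\ref{lem:invariant} at $k=m$, use $N_{\le m}=N$ to obtain $R^2 \le N < (R+1)^2$, conclude $R=\lfloor\sqrt N\rfloor$, and read off the perfect-square case from $\Delta=0$. Your version is simply more explicit about matching the algorithm to the lemma and about the passage from the two-sided inequality to the floor characterisation, but the underlying argument is identical.
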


\begin{proof}
    By Lemma~\ref{lem:invariant} (Remainder Invariant), after processing all digit pairs (\(k = m\)), we have
    \[
    N_{\le m} = N = R^2 + \Delta, \quad \Delta \ge 0,
    \]
    and
    \[
    R^2 \le N < (R + 1)^2.
    \]
    The inequality \(R^2 \le N < (R + 1)^2\) is precisely the definition of \(R = \lfloor \sqrt{N} \rfloor\).

    If the final remainder \(\Delta = 0\), then \(N = R^2\), so \(N\) is a perfect square and \(R = \sqrt{N}\).

    This completes the proof.
\end{proof}

\section{Exact Fractional Square Root}

\begin{lemma}[Remainder Invariant for Fractional Expansion]
    \label{lem:sqrt-invariant}
    After computing \(k\) decimal digits (for \(k \ge 0\)),
    \[
    N \cdot 10^{2k} = R_k^2 + \Delta_k,
    \]
    where \(R_k\) is the partial approximation constructed so far (including the integer part and \(k\) decimal digits) and \(\Delta_k \ge 0\) is the corresponding remainder. Moreover,
    \[
    R_k^2 \le N \cdot 10^{2k} < (R_k + 1)^2.
    \]
\end{lemma}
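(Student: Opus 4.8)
The proof will run by induction on $k$, re-using almost verbatim the algebraic template of Lemma~\ref{lem:invariant}; the only structural change is that in the fractional phase the pair of digits ``brought down'' at each step is $00$, so the scaled-remainder update degenerates to $\Delta_k' = 100\,\Delta_k$ with no additive term. For the base case $k=0$ one has $N\cdot 10^{0}=N$, and $R_0=\lfloor\sqrt{N}\rfloor$, $\Delta_0=N-R_0^2$ are exactly the outputs of the integer phase, so $\Delta_0\ge 0$ and $R_0^2\le N<(R_0+1)^2$ are precisely what Theorem~\ref{thm:correctness-integer-sqrt} delivers.

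For the inductive step I would assume $N\cdot 10^{2k}=R_k^2+\Delta_k$ with $\Delta_k\ge 0$ and $R_k^2\le N\cdot 10^{2k}<(R_k+1)^2$, multiply the identity by $100$ to get
\[
N\cdot 10^{2(k+1)} = 100R_k^2 + 100\Delta_k = (10R_k)^2 + \Delta_k', \qquad \Delta_k' := 100\Delta_k \ge 0,
\]
then let $x\in\{0,\dots,9\}$ be the largest digit with $(20R_k+x)x\le\Delta_k'$ (such $x$ exists since $x=0$ is always admissible), and set $R_{k+1}=10R_k+x$, $\Delta_{k+1}=\Delta_k'-(20R_k+x)x\ge 0$. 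Applying Lemma~\ref{lem:square-expansion} in the form $(10R_k+x)^2 = 100R_k^2 + (20R_k+x)x$ gives $N\cdot 10^{2(k+1)} = (10R_k+x)^2+\Delta_{k+1} = R_{k+1}^2+\Delta_{k+1}$, which is the equality half of the invariant; the strict upper bound $N\cdot 10^{2(k+1)} < (R_{k+1}+1)^2$ then follows from maximality of $x$, exactly as in Lemma~\ref{lem:invariant}.

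The only step that genuinely deserves attention — the mild obstacle here — is confirming that the greedily selected digit always lies in $\{0,\dots,9\}$, i.e.\ that the fractional phase really does produce one decimal digit per iteration rather than occasionally needing a carry. I would obtain this by propagating the inductive strict bound: since $\Delta_k = N\cdot 10^{2k}-R_k^2$ is an integer with $\Delta_k < (R_k+1)^2-R_k^2 = 2R_k+1$, we have $\Delta_k\le 2R_k$, hence $\Delta_k' = 100\Delta_k \le 200R_k < (20R_k+10)\cdot 10$, so $x=10$ is infeasible and the maximal admissible digit is at most $9$. This bound, together with the invariant, also identifies $R_k=\lfloor 10^k\sqrt{N}\rfloor$, which is the fact the subsequent exactness and $10^{-k}$ error-bound theorems will rest on.
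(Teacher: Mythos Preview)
Your proof is correct and follows exactly the same inductive template as the paper: base case $R_0=\lfloor\sqrt{N}\rfloor$, multiply by $100$, choose the maximal admissible $x$, and use the square-expansion identity for both the equality and the strict upper bound. The one point on which you are actually more careful than the paper is the verification that the greedy digit never exceeds $9$: the paper simply writes ``$x$ is maximal, so $(20R_k+(x{+}1))(x{+}1)>\Delta_k'$'', which is only justified by the search range when $x\le 8$; your bound $\Delta_k\le 2R_k$ (from $\Delta_k<2R_k+1$ and integrality) giving $\Delta_k'\le 200R_k<(20R_k+10)\cdot 10$ is precisely what is needed to cover the case $x=9$ and close that small gap.
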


\begin{proof}
    We prove the invariant by induction on \(k\).

    Base case (\(k = 0\)):
    Set \(R_0 = \lfloor \sqrt{N} \rfloor\), the largest integer such that \(R_0^2 \le N\). Define \(\Delta_0 = N - R_0^2 \ge 0\).
    Then
    \[
    N \cdot 10^{0} = N = R_0^2 + \Delta_0,
    \]
    and by definition of the floor function,
    \[
    R_0^2 \le N < (R_0 + 1)^2.
    \]

    Inductive hypothesis: Assume the invariant and bound hold for some \(k \ge 0\):
    \[
    N \cdot 10^{2k} = R_k^2 + \Delta_k, \quad \Delta_k \ge 0,
    \]
    and
    \[
    R_k^2 \le N \cdot 10^{2k} < (R_k + 1)^2.
    \]

    Inductive step (to \(k+1\)):
    Scale the remainder:
    \[
    \Delta_k' = 100 \cdot \Delta_k.
    \]
    Multiplying the invariant equation by 100 gives
    \[
    N \cdot 10^{2(k+1)} = 100 \cdot (R_k^2 + \Delta_k) = (10 R_k)^2 + \Delta_k'.
    \]

    Select the largest digit \(x \in \{0, 1, \dots, 9\}\) such that
    \[
    (20 R_k + x) \cdot x \le \Delta_k'.
    \]
    Update
    \[
    R_{k+1} = 10 R_k + x, \quad \Delta_{k+1} = \Delta_k' - (20 R_k + x) \cdot x.
    \]
    Since \(x\) satisfies the inequality, \(\Delta_{k+1} \ge 0\).

    Using the square expansion identity (Lemma~\ref{lem:square-expansion}),
    \[
    (10 R_k + x)^2 = (10 R_k)^2 + (20 R_k + x) \cdot x = 100 R_k^2 + (20 R_k + x) \cdot x,
    \]
    substitute into the scaled equation:
    \begin{align*}
        N \cdot 10^{2(k+1)} &= 100 R_k^2 + \Delta_k' \\
        &= 100 R_k^2 + (20 R_k + x) \cdot x + \Delta_{k+1} \\
        &= (10 R_k + x)^2 + \Delta_{k+1} \\
        &= R_{k+1}^2 + \Delta_{k+1}.
    \end{align*}
    This confirms the equality part of the invariant for \(k+1\).

    To establish the strict upper bound for \(k+1\), note that \(x\) is maximal, so
    \[
    (20 R_k + (x+1)) \cdot (x+1) > \Delta_k'.
    \]
    Then
    \[
    (10 R_k + (x+1))^2 = 100 R_k^2 + (20 R_k + (x+1)) \cdot (x+1) > 100 R_k^2 + \Delta_k' = N \cdot 10^{2(k+1)}.
    \]
    Thus,
    \[
    N \cdot 10^{2(k+1)} < (R_{k+1} + 1)^2.
    \]
    Combined with \(\Delta_{k+1} \ge 0\), we have
    \[
    R_{k+1}^2 \le N \cdot 10^{2(k+1)} < (R_{k+1} + 1)^2.
    \]

    This completes the induction.
\end{proof}
We now present the digit-by-digit algorithm for computing the fractional
expansion of $\sqrt{N}$, whose correctness and exactness properties are
established in the subsequent results.

\begin{algorithm}[H]
    \caption{Exact Digit-by-Digit Fractional Square Root}
    \label{alg:fractional-sqrt}
    \begin{algorithmic}[1]
        \Require Integer $N \ge 0$, number of decimal digits $K$
        \Ensure Decimal truncation of $\sqrt{N}$ with $K$ digits
        \State $R \gets \lfloor \sqrt{N} \rfloor$
        \State $\Delta \gets N - R^2$
        \For{$k = 1$ to $K$}
        \State $\Delta \gets 100\Delta$
        \State Find largest $x \in \{0,\dots,9\}$ such that $(20R + x)x \le \Delta$
        \State $\Delta \gets \Delta - (20R + x)x$
        \State $R \gets 10R + x$
        \EndFor
        \State \Return $R \times 10^{-K}$
    \end{algorithmic}
\end{algorithm}

The following results establish the correctness, exactness, and error bounds
of Algorithm~\ref{alg:fractional-sqrt}.

\subsection{Exact Digit Theorem}
\label{subsec:exact-digits}
\begin{theorem}[Exact Digit Property]
    \label{thm:exact-digits}
    For all \(k \ge 1\),
    \[
    R_k = \left\lfloor 10^k \sqrt{N} \right\rfloor.
    \]
\end{theorem}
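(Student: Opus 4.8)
The plan is to obtain the statement almost immediately from the Remainder Invariant for Fractional Expansion (Lemma~\ref{lem:sqrt-invariant}), which already delivers the two-sided bound $R_k^2 \le N \cdot 10^{2k} < (R_k+1)^2$ for every $k \ge 0$. The one structural observation that does the real work is that $N \cdot 10^{2k} = \bigl(10^k \sqrt{N}\bigr)^2$, so the invariant can be rewritten as $R_k^2 \le \bigl(10^k\sqrt{N}\bigr)^2 < (R_k+1)^2$.

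Next I would invoke monotonicity of squaring on the non-negative reals. All three quantities $R_k$, $10^k\sqrt{N}$, and $R_k+1$ are non-negative, so taking square roots across the displayed chain yields $R_k \le 10^k\sqrt{N} < R_k+1$. This is precisely the defining characterization of the floor, hence $R_k = \bigl\lfloor 10^k\sqrt{N}\bigr\rfloor$, which is the claim. A brief remark that $R_k \ge 0$ should be included for completeness: $R_0 = \lfloor\sqrt{N}\rfloor \ge 0$, and each update $R_{k+1} = 10R_k + x$ with $x \in \{0,\dots,9\}$ preserves non-negativity, so the square-root step is legitimate.

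I do not expect any genuine obstacle: the inductive content was already discharged in Lemma~\ref{lem:sqrt-invariant}, and the remaining argument is just passing from an inequality between squares to the inequality between their non-negative bases. The only point deserving a sentence of care is exactly that monotonicity passage. In keeping with the paper's constructive stance, I would also note that the conclusion can be stated without ever mentioning the irrational $\sqrt{N}$: $R_k$ is the unique non-negative integer satisfying $R_k^2 \le N\cdot 10^{2k} < (R_k+1)^2$, and $\bigl\lfloor 10^k\sqrt{N}\bigr\rfloor$ is by definition that same integer, so the identity holds with an explicit algorithmic witness, namely the value produced by Algorithm~\ref{alg:fractional-sqrt} after $k$ iterations.
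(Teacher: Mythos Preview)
Your proposal is correct and follows essentially the same route as the paper: invoke the remainder invariant from Lemma~\ref{lem:sqrt-invariant}, use monotonicity of the square root on $[0,\infty)$ to pass from $R_k^2 \le N\cdot 10^{2k} < (R_k+1)^2$ to $R_k \le 10^k\sqrt{N} < R_k+1$, and read off the floor. Your explicit check that $R_k \ge 0$ and the constructive reformulation are slight embellishments, but the underlying argument is identical to the paper's.
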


\begin{proof}
    The proof relies on the Remainder Invariant (Lemma~\ref{lem:sqrt-invariant}).

    By Lemma~\ref{lem:sqrt-invariant}, after computing \(k\) decimal digits (for \(k \ge 0\)), the algorithm maintains integers \(R_k\) and \(\Delta_k \ge 0\) such that
    \[
    N \cdot 10^{2k} = R_k^2 + \Delta_k,
    \]
    with the strict bound
    \[
    R_k^2 \le N \cdot 10^{2k} < (R_k + 1)^2.
    \]

    From the bounds,
    \[
    R_k^2 \le N \cdot 10^{2k} < (R_k + 1)^2.
    \]

    Since all terms are non-negative and the square root function is strictly increasing on \([0, \infty)\), taking square roots preserves the inequalities:
    \[
    R_k \le \sqrt{N \cdot 10^{2k}} < R_k + 1.
    \]

    This simplifies to
    \[
    R_k \le 10^k \sqrt{N} < R_k + 1.
    \]

    This is precisely the definition of the floor function:
    \[
    R_k = \left\lfloor 10^k \sqrt{N} \right\rfloor.
    \]

    The maximality of each digit selection ensures the strict upper bound holds at every step, preventing any need for revision of prior digits.

    This completes the proof.
\end{proof}
Building on the correctness of the integer square root algorithm
(Theorem~\ref{thm:correctness-integer-sqrt}), the following theorem
establishes the exactness and non-revisability of the fractional digits
produced by the digit-by-digit square root method.

\begin{theorem}[Exactness of Fractional Square Root Digits]
    \label{thm:exactness-fractional}
    For the fractional square root algorithm computing the decimal expansion of \(\sqrt{N}\) up to \(k\) digits (for any \(k \ge 1\)), the constructed integer \(R_k\) satisfies
    \[
    R_k = \left\lfloor 10^k \sqrt{N} \right\rfloor.
    \]
    Consequently, every computed decimal digit is exact and final (non-revisable by further digits).
\end{theorem}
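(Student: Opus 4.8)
The plan is to obtain this theorem as an essentially immediate corollary of the Exact Digit Property (Theorem~\ref{thm:exact-digits}): that result already asserts $R_k = \lfloor 10^k\sqrt{N}\rfloor$ for every $k \ge 1$, which is precisely the displayed identity here. So the first assertion requires no new argument beyond citing Theorem~\ref{thm:exact-digits}, together with the trivial observation that the initial value $R_0 = \lfloor\sqrt{N}\rfloor$ equals $\lfloor 10^{0}\sqrt{N}\rfloor$, so that in fact $R_k = \lfloor 10^{k}\sqrt{N}\rfloor$ holds for all $k \ge 0$; this extended range is what makes the boundary step $k=1$ in the next paragraph go through.

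The content of the ``consequently'' clause is the exactness and non-revisability of the individual digits. First I would pin down what ``the $j$-th computed digit'' means: at iteration $j$ the algorithm sets $R_j = 10R_{j-1} + x_j$ with $x_j \in \{0,\dots,9\}$, so the digit it emits at step $j$ is $x_j = R_j - 10R_{j-1}$. Substituting the floor identity (applied to $R_j$ with $j \ge 1$ and to $R_{j-1}$ with $j-1 \ge 0$) gives the closed form
\[
x_j \;=\; \bigl\lfloor 10^{j}\sqrt{N}\bigr\rfloor \;-\; 10\bigl\lfloor 10^{\,j-1}\sqrt{N}\bigr\rfloor .
\]
Next I would invoke the elementary characterization of the decimal expansion of a non-negative real: writing $\sqrt{N} = d_0.d_1d_2d_3\cdots$ with $d_i \in \{0,\dots,9\}$ for $i \ge 1$, one has $\lfloor 10^{j}\sqrt{N}\rfloor = \sum_{i=0}^{j} d_i\,10^{\,j-i}$ for every $j \ge 0$ (this is the defining property of the digit sequence and uses $\sqrt{N}\ge 0$). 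Subtracting ten times this identity at index $j-1$ telescopes the sum and yields $x_j = d_j$, i.e.\ the $j$-th digit produced by the algorithm is the true $j$-th decimal digit of $\sqrt{N}$. Since the closed form above depends only on $j$ and $N$ — not on the total digit count $K$ nor on any later index $k>j$ — the value $x_j$ is never altered by subsequent iterations, which is exactly the non-revisability claim.

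Honestly, I do not expect any step of real difficulty here: the statement is a corollary of Theorem~\ref{thm:exact-digits}. The only points needing a little care are (a) the $j=1$ boundary, where $R_{j-1}=R_0$ must be matched against the floor formula with exponent $0$, which is why I would state that formula for all $k\ge 0$ up front; and (b) phrasing the digit-extraction identity $\lfloor 10^{j}\alpha\rfloor - 10\lfloor 10^{\,j-1}\alpha\rfloor = d_j$ precisely enough to be rigorous without belaboring it. I would close by recalling, as in the remark inside the proof of Theorem~\ref{thm:exact-digits}, that it is the maximality of the digit choice at each step that forces the strict upper bound $N\cdot 10^{2j} < (R_j+1)^2$, hence the floor identity, so that no hypothetical ``carry'' from later digits can ever propagate backward and demand a revision.
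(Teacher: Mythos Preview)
Your proposal is correct and follows essentially the same route as the paper: the paper re-derives $R_k = \lfloor 10^k\sqrt{N}\rfloor$ from the Remainder Invariant (Lemma~\ref{lem:sqrt-invariant}) rather than citing Theorem~\ref{thm:exact-digits} directly as you do, but the content is identical. For the non-revisability clause the paper is terser, invoking only ``linearity of the update'' and ``perfect scaling alignment'' with a pointer back to Lemma~\ref{lem:sqrt-invariant}, whereas your explicit digit-extraction identity $x_j = \lfloor 10^{j}\sqrt{N}\rfloor - 10\lfloor 10^{\,j-1}\sqrt{N}\rfloor = d_j$ makes that part of the argument sharper and self-contained.
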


\begin{proof}
    By the Remainder Invariant (Lemma~\ref{lem:sqrt-invariant}),
    \[
    N \cdot 10^{2k} = R_k^2 + \Delta_k, \quad \Delta_k \ge 0,
    \]
    with
    \[
    R_k^2 \le N \cdot 10^{2k} < (R_k + 1)^2.
    \]

    Taking square roots (strictly increasing),
    \[
    R_k \le 10^k \sqrt{N} < R_k + 1,
    \]
    so
    \[
    R_k = \left\lfloor 10^k \sqrt{N} \right\rfloor.
    \]

    The linearity of the update \((20 R_{k-1} + x) x\) and perfect scaling alignment ensure no backward propagation, making digits non-revisable (detailed induction in Lemma~\ref{lem:sqrt-invariant} proof).
\end{proof}

\begin{theorem}[Truncation Error Bound]
    \label{thm:error-bound}
    Let \(N \geq 0\) be an integer, and let the digit-by-digit fractional square root algorithm compute \(k\) decimal digits of \(\sqrt{N}\), producing the integer \(R_k\). Then
    \[
    0 \le \sqrt{N} - \frac{R_k}{10^k} < 10^{-k}.
    \]
\end{theorem}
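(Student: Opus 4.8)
The plan is to obtain the bound directly from the Exact Digit Property (Theorem~\ref{thm:exact-digits}), which already identifies $R_k$ as $\lfloor 10^k \sqrt{N}\rfloor$. Once that identification is available, the error estimate is a one-line consequence of the defining inequality of the floor function, so essentially all the work has already been done inside the invariant argument of Lemma~\ref{lem:sqrt-invariant}.

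Concretely, I would first invoke Theorem~\ref{thm:exact-digits} to write $R_k = \lfloor 10^k \sqrt{N} \rfloor$, which by the definition of the floor yields the sandwich
\[
R_k \le 10^k \sqrt{N} < R_k + 1 .
\]
Second, I would divide through by $10^k > 0$ (which preserves both inequalities), obtaining
\[
\frac{R_k}{10^k} \le \sqrt{N} < \frac{R_k}{10^k} + 10^{-k}.
\]
Third, subtracting $R_k/10^k$ from all three members gives exactly
\[
0 \le \sqrt{N} - \frac{R_k}{10^k} < 10^{-k},
\]
which is the claim. As an alternative that avoids citing Theorem~\ref{thm:exact-digits}, one can argue straight from Lemma~\ref{lem:sqrt-invariant}: the relation $N\cdot 10^{2k} = R_k^2 + \Delta_k$ together with $R_k^2 \le N\cdot 10^{2k} < (R_k+1)^2$ and the strict monotonicity of $\sqrt{\cdot}$ on $[0,\infty)$ gives $R_k \le 10^k\sqrt{N} < R_k + 1$ at once, and the same two arithmetic steps conclude.

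I do not expect any genuine obstacle: the difficulty has been front-loaded into establishing the remainder invariant, in particular the strict upper bound $N\cdot 10^{2k} < (R_k+1)^2$, which rests on the maximality of each digit choice. The only points I would flag in a remark are that the left inequality is non-strict and collapses to equality precisely when $\Delta_k = 0$, and that the constant $10^{-k}$ is sharp: although the right-hand inequality is strict and hence never attained, it cannot be replaced by any smaller constant independent of $N$, since $10^k\sqrt{N} - \lfloor 10^k\sqrt{N}\rfloor$ can be pushed arbitrarily close to $1$.
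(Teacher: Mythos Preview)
Your proposal is correct and follows essentially the same route as the paper: invoke Theorem~\ref{thm:exact-digits} to obtain $R_k=\lfloor 10^k\sqrt{N}\rfloor$, apply the defining floor inequality, divide by $10^k$, and rearrange. Your alternative path directly from Lemma~\ref{lem:sqrt-invariant} and your sharpness remark go slightly beyond what the paper records, but the core argument is identical.
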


\begin{proof}
    The proof follows directly from the Exact Digit Theorem~\ref{thm:exact-digits}, which states that
    \[
    R_k = \left\lfloor 10^k \sqrt{N} \right\rfloor
    \]
    for all \(k \geq 0\).

    By the definition of the floor function, for any real number \(y \geq 0\),
    \[
    \left\lfloor y \right\rfloor \le y < \left\lfloor y \right\rfloor + 1.
    \]
    Applying this with \(y = 10^k \sqrt{N}\), we obtain
    \[
    R_k \le 10^k \sqrt{N} < R_k + 1.
    \]
    Since \(R_k\) and \(10^k\) are positive integers (for \(N > 0\); the case \(N=0\) is trivial), we may divide the entire inequality by \(10^k > 0\) while preserving the direction of the inequalities:
    \[
    \frac{R_k}{10^k} \le \sqrt{N} < \frac{R_k}{10^k} + \frac{1}{10^k} = \frac{R_k}{10^k} + 10^{-k}.
    \]
    Rearranging the left inequality gives
    \[
    0 \le \sqrt{N} - \frac{R_k}{10^k},
    \]
    and the right inequality gives
    \[
    \sqrt{N} - \frac{R_k}{10^k} < 10^{-k}.
    \]
    Combining these yields the desired sharp bound:
    \[
    0 \le \sqrt{N} - \frac{R_k}{10^k} < 10^{-k}.
    \]

    \textbf{Remarks:}
    \begin{itemize}
        \item The lower bound is zero because \(\sqrt{N}\) is at least its truncation \(\frac{R_k}{10^k}\).
        \item The upper bound is strict ($<$ rather than $\le$) because the algorithm's digit selection is maximal: the next possible digit would make the partial square exceed the scaled \(N\), ensuring \(10^k \sqrt{N}\) is strictly less than \(R_k + 1\).
        \item For \(N = 0\), \(R_k = 0\) and the bound holds with equality to zero.
        \item The bound holds for \(k = 0\) as well: \(0 \le \sqrt{N} - \lfloor \sqrt{N} \rfloor < 1\), which is true since the fractional part of \(\sqrt{N}\) lies in \([0,1)\).
    \end{itemize}

    This bound confirms that the approximation \(\frac{R_k}{10^k}\) is a truncation (not a rounding) of \(\sqrt{N}\) to \(k\) decimal places, with error strictly less than one unit in the last place.

\end{proof}

\section{General Integer \(e\)-th Roots}
\label{sec:general}
\begin{definition}[Partial Number from Highest Digit Blocks]
    \label{def:partial-N-general}
    Let \(N\) be decomposed into blocks of \(e\) decimal digits (padding with leading zeros if necessary) as
    \[
    N = \sum_{j=0}^{m} a_j \, 10^{e j},
    \]
    where \(a_j \in \{0, 1, \dots, 10^e - 1\}\) and \(a_m\) is the most significant (highest) block.

    For \(0 \le k \le m\), define \(N_{\le k}\) as the integer formed by the highest \(k+1\) blocks:
    \[
    N_{\le k} = \sum_{j=m-k}^{m} a_j \, 10^{e (j - (m-k))}.
    \]
    Equivalently,
    \[
    N_{\le k} = \left\lfloor \frac{N}{10^{e (m-k)}} \right\rfloor.
    \]
\end{definition}

\begin{definition}[Binomial Increment Function]
    \label{def:binomial-increment}
    For $e \ge 2$, define
    \[
    \Phi_e(R,x)
    := (10R + x)^e - (10R)^e
    = \sum_{k=1}^{e} \binom{e}{k} (10R)^{e-k} x^k.
    \]
\end{definition}
\begin{lemma}[Monotonicity of the Increment]
    \label{lem:monotonic}
    Let $\Phi_e(R,x)$ be defined as in
    Definition~\ref{def:binomial-increment}.
    For fixed $R \ge 0$ and integer $e \ge 2$, the function $\Phi_e(R,x)$ is
    strictly increasing for $x \ge 0$.
\end{lemma}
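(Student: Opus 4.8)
The plan is to reduce the statement to the elementary fact that the power map $t \mapsto t^e$ is strictly increasing on $[0,\infty)$ for every integer $e \ge 1$. First I would observe that, for fixed $R$, the term $(10R)^e$ occurring in Definition~\ref{def:binomial-increment} does not depend on $x$, so $x \mapsto \Phi_e(R,x)$ is strictly increasing on $\{x \ge 0\}$ if and only if $x \mapsto (10R+x)^e$ is. Since $R \ge 0$ and $x \ge 0$, the argument $10R + x$ lies in $[0,\infty)$ and is itself strictly increasing in $x$; composing with the strictly increasing $e$-th power map then yields the claim.

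Alternatively, and perhaps more in keeping with the binomial viewpoint of the paper, I would argue directly from the sum $\Phi_e(R,x) = \sum_{k=1}^{e} \binom{e}{k} (10R)^{e-k} x^k$. Every coefficient $\binom{e}{k}(10R)^{e-k}$ is a nonnegative integer, so each summand $x \mapsto \binom{e}{k}(10R)^{e-k} x^k$ is nondecreasing on $[0,\infty)$, while the top summand (the $k = e$ term) is exactly $x^e$, which is strictly increasing there. Hence for any integers (or reals) $0 \le x_1 < x_2$ we get $\Phi_e(R,x_1) < \Phi_e(R,x_2)$ by adding the inequalities term by term. This route makes transparent that strictness is supplied by the leading term $x^e$ alone, not by the linear term $e(10R)^{e-1}x$.

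There is essentially no hard step here; the only point requiring a little care is the degeneracy at $R = 0$, where the linear term vanishes and a naive ``first term dominates'' argument would fail. Both approaches above sidestep this by leaning on strict monotonicity of the $e$-th power (equivalently, on the $x^e$ summand) rather than on the linear term. To keep the argument fully within the constructive, purely arithmetic framework of the paper, I would also state and prove the auxiliary fact that $0 \le a < b$ implies $a^e < b^e$ for every integer $e \ge 1$, by a one-line induction on $e$ using $b^{e} - a^{e} = (b-a)\sum_{i=0}^{e-1} a^{i} b^{e-1-i}$ together with positivity of $b - a$ and of the (nonempty, and for $b>0$ positive) sum. Assembling these pieces gives the strict monotonicity of $\Phi_e(R,\cdot)$ on the nonnegative integers (indeed on $[0,\infty)$), which is all that is needed for the digit-selection arguments that follow.
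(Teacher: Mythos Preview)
Your proposal is correct, and your second route (arguing directly from the binomial sum with nonnegative coefficients) is exactly the paper's approach, which dispatches the lemma in a single line by noting that $\binom{e}{k} > 0$ and the terms are nonnegative. Your treatment is in fact more careful than the paper's: you explicitly identify the $x^e$ summand as the source of \emph{strict} monotonicity and note the $R=0$ degeneracy, whereas the paper's one-line proof leaves strictness implicit.
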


\begin{proof}
    All binomial coefficients \(\binom{e}{k} > 0\) for \(1 \le k \le e\), and the terms are non-negative for \(x \ge 0\).
\end{proof}

\begin{lemma}[Remainder Invariant]
    \label{lem:invariant-general}
    After processing the highest \(k+1\) blocks,
    \[
    N_{\le k} = R_k^e + \Delta_k, \qquad \Delta_k \ge 0,
    \]
    and
    \[
    R_k^e \le N_{\le k} < (R_k + 1)^e.
    \]
\end{lemma}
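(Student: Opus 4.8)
The plan is to establish both the equality $N_{\le k} = R_k^e + \Delta_k$ with $\Delta_k \ge 0$ and the two-sided bound $R_k^e \le N_{\le k} < (R_k+1)^e$ simultaneously, by induction on $k$, closely following the square-root argument of Lemma~\ref{lem:invariant} but with the quadratic identity replaced throughout by the binomial increment $\Phi_e$ of Definition~\ref{def:binomial-increment} and with Lemma~\ref{lem:monotonic} supplying the control over digit selection.

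For the base case $k=0$ I would take $R_0 = \lfloor a_m^{1/e}\rfloor$, i.e.\ the largest integer with $R_0^e \le a_m$, and set $\Delta_0 = a_m - R_0^e \ge 0$. Since $N_{\le 0} = a_m$ by Definition~\ref{def:partial-N-general}, the equality $N_{\le 0} = R_0^e + \Delta_0$ is immediate, and maximality of $R_0$ gives $R_0^e \le a_m < (R_0+1)^e$. For the inductive step, assuming the invariant at level $k$, bringing down the next block $a_{m-k-1}$ produces the scaled remainder $\Delta_k' = 10^e\Delta_k + a_{m-k-1}$, and combining $N_{\le k+1} = 10^e N_{\le k} + a_{m-k-1}$ with the inductive equality gives
\[
N_{\le k+1} = 10^e(R_k^e + \Delta_k) + a_{m-k-1} = (10R_k)^e + \Delta_k'.
\]
I then let $x$ be the largest digit in $\{0,\dots,9\}$ with $\Phi_e(R_k,x) \le \Delta_k'$ — the feasible set is nonempty since $\Phi_e(R_k,0)=0$, and is an initial segment of the integers by Lemma~\ref{lem:monotonic} — and set $R_{k+1} = 10R_k + x$, $\Delta_{k+1} = \Delta_k' - \Phi_e(R_k,x) \ge 0$. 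Since $(10R_k + x)^e = (10R_k)^e + \Phi_e(R_k,x)$ by definition of $\Phi_e$, substitution yields $N_{\le k+1} = R_{k+1}^e + \Delta_{k+1}$ with $\Delta_{k+1}\ge 0$, which is the equality half of the invariant.

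The one step requiring genuine care is the strict upper bound $N_{\le k+1} < (R_{k+1}+1)^e$. First I would verify that restricting $x$ to a single digit loses nothing: from the inductive bound $N_{\le k} \le (R_k+1)^e - 1$ we get $\Delta_k \le (R_k+1)^e - R_k^e - 1$, so with $a_{m-k-1} \le 10^e - 1$ one computes
\[
\Delta_k' \le 10^e\bigl((R_k+1)^e - R_k^e\bigr) - 1 = \Phi_e(R_k,10) - 1 < \Phi_e(R_k,10),
\]
so $x=10$ is never feasible; hence the maximal digit $x$ is in fact the maximal non-negative integer satisfying the inequality, and $\Phi_e(R_k,x+1) > \Delta_k'$, the strictness being guaranteed by Lemma~\ref{lem:monotonic}. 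Then
\[
(R_{k+1}+1)^e = (10R_k + (x+1))^e = (10R_k)^e + \Phi_e(R_k,x+1) > (10R_k)^e + \Delta_k' = N_{\le k+1},
\]
which together with $\Delta_{k+1}\ge 0$ completes the induction. I expect this ``single digit suffices'' argument — the interplay between the base-$10^e$ block size and the inductive upper bound — to be the only delicate point; the rest is a direct transcription of the square-root proof with $\Phi_e$ in place of $(20R+x)x$.
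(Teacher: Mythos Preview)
Your proof is correct and follows essentially the same inductive structure as the paper's own argument. You actually go one step further than the paper by explicitly verifying that the digit $x=10$ is never feasible (so that maximality of $x$ within $\{0,\dots,9\}$ really does give $\Phi_e(R_k,x+1) > \Delta_k'$ even when $x=9$), a point the paper's proof silently assumes.
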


\begin{proof}
    We prove the invariant by induction on \(k\).

    Base case (\(k=0\)):
    Process the highest block \(a_{m-1}\) (denoted \(a_0\) for the initial step).
    Set \(R_0 = \lfloor a_0^{1/e} \rfloor\), the largest integer such that \(R_0^e \le a_0\).
    Set \(\Delta_0 = a_0 - R_0^e \ge 0\).
    Then \(N_{\le 0} = a_0 = R_0^e + \Delta_0\), and by definition of the floor function,
    \[
    R_0^e \le a_0 < (R_0 + 1)^e.
    \]

    Inductive hypothesis: Assume the invariant holds for some \(k \ge 0\):
    \[
    N_{\le k} = R_k^e + \Delta_k, \quad \Delta_k \ge 0,
    \]
    and
    \[
    R_k^e \le N_{\le k} < (R_k + 1)^e.
    \]

    Inductive step (to \(k+1\)):
    Bring down the next lower block \(a_{m-k-1}\), forming the scaled remainder
    \[
    \Delta_k' = 10^e \Delta_k + a_{m-k-1}.
    \]
    By Definition~\ref{def:partial-N-general} (adapted for blocks of size \(e\)),
    \[
    N_{\le k+1} = 10^e N_{\le k} + a_{m-k-1} = 10^e (R_k^e + \Delta_k) + a_{m-k-1} = (10 R_k)^e + \Delta_k'.
    \]

    Define the increment
    \[
    \Phi_e(R_k, x) = (10 R_k + x)^e - (10 R_k)^e = \sum_{j=1}^{e} \binom{e}{j} (10 R_k)^{e-j} x^j.
    \]
    By Lemma~\ref{lem:monotonic}, \(\Phi_e(R_k, x)\) is strictly increasing in \(x \ge 0\).

    Choose the largest \(x \in \{0,\dots,9\}\) such that
    \[
    \Phi_e(R_k, x) \le \Delta_k'.
    \]
    Such an \(x\) exists (at least \(x=0\)) and is unique.

    Update
    \[
    R_{k+1} = 10 R_k + x, \quad \Delta_{k+1} = \Delta_k' - \Phi_e(R_k, x) \ge 0.
    \]

    Then
    \begin{align*}
        N_{\le k+1} &= (10 R_k)^e + \Delta_k' \\
        &= (10 R_k)^e + \Phi_e(R_k, x) + \Delta_{k+1} \\
        &= (10 R_k + x)^e + \Delta_{k+1} \\
        &= R_{k+1}^e + \Delta_{k+1}.
    \end{align*}

    For the strict upper bound: maximality of \(x\) implies
    \[
    \Phi_e(R_k, x+1) > \Delta_k',
    \]
    so
    \[
    (10 R_k + x + 1)^e = (10 R_k)^e + \Phi_e(R_k, x+1) > (10 R_k)^e + \Delta_k' = N_{\le k+1}.
    \]
    Thus
    \[
    N_{\le k+1} < (R_{k+1} + 1)^e.
    \]

    Combined with \(\Delta_{k+1} \ge 0\), we have
    \[
    R_{k+1}^e \le N_{\le k+1} < (R_{k+1} + 1)^e.
    \]

    This completes the induction.
\end{proof}

\begin{lemma}[Optimal Digit Selection]
    \label{lem:digit-selection-general}
    Choosing the largest \(x\) such that \(\Phi_e(R_k, x) \le \Delta_k'\) ensures the tight bound in the invariant.
\end{lemma}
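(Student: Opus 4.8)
The plan is to show that the largest digit $x\in\{0,\dots,9\}$ satisfying $\Phi_e(R_k,x)\le\Delta_k'$ yields a partial root $R_{k+1}=10R_k+x$ for which the two-sided bound $R_{k+1}^e\le N_{\le k+1}<(R_{k+1}+1)^e$ holds, the strict right-hand inequality being the ``tight'' assertion of the invariant. This is, in effect, the digit-selection half of the inductive step in Lemma~\ref{lem:invariant-general}, isolated and made explicit. First I would record well-definedness: the feasible set $\{x\ge 0 : \Phi_e(R_k,x)\le\Delta_k'\}$ contains $x=0$ since $\Phi_e(R_k,0)=0$, and by the strict monotonicity of $\Phi_e(R_k,\cdot)$ from Lemma~\ref{lem:monotonic} it is a downward-closed set of non-negative integers, so it has a unique largest element, and the algorithm's candidate set $\{0,\dots,9\}$ selects a well-defined maximal digit $x$.

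Next I would derive the lower bound. Using $N_{\le k+1}=(10R_k)^e+\Delta_k'$ (established in the proof of Lemma~\ref{lem:invariant-general}) together with $(10R_k+x)^e=(10R_k)^e+\Phi_e(R_k,x)$, feasibility of $x$ gives
\[
R_{k+1}^e = (10R_k)^e + \Phi_e(R_k,x) \le (10R_k)^e + \Delta_k' = N_{\le k+1},
\]
and setting $\Delta_{k+1}=\Delta_k'-\Phi_e(R_k,x)\ge 0$ simultaneously re-establishes the equality $N_{\le k+1}=R_{k+1}^e+\Delta_{k+1}$ with a non-negative remainder.

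For the strict upper bound I would show $\Phi_e(R_k,x+1)>\Delta_k'$, whence
\[
(R_{k+1}+1)^e = (10R_k)^e + \Phi_e(R_k,x+1) > (10R_k)^e + \Delta_k' = N_{\le k+1}.
\]
When $x\le 8$ this is immediate from maximality of $x$ inside $\{0,\dots,9\}$. The delicate case is $x=9$: maximality over $\{0,\dots,9\}$ alone does not forbid $x=10$, so I would invoke the \emph{inductive hypothesis} $N_{\le k}<(R_k+1)^e$, i.e. $\Delta_k\le(R_k+1)^e-R_k^e-1$, to bound
\[
\Delta_k' = 10^e\Delta_k + a_{m-k-1} \le 10^e\bigl((R_k+1)^e-R_k^e-1\bigr) + (10^e-1) < 10^e\bigl((R_k+1)^e-R_k^e\bigr) = \Phi_e(R_k,10),
\]
so $x=10$ is infeasible anyway and $\Phi_e(R_k,x+1)>\Delta_k'$ holds in all cases. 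This coupling — that the strict upper bound at stage $k$ is precisely what prevents the digit at stage $k+1$ from overflowing past $9$ — is the main obstacle, and it is why the two halves of the invariant must be carried jointly through the induction rather than proved in isolation. Finally, combining $R_{k+1}^e\le N_{\le k+1}$, the strict bound $N_{\le k+1}<(R_{k+1}+1)^e$, and $\Delta_{k+1}\ge 0$ gives exactly the tight bound asserted, completing the argument.
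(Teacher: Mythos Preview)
Your argument is correct and rests on the same two ingredients the paper invokes---monotonicity of $\Phi_e(R_k,\cdot)$ and maximality of the chosen digit---but the paper's proof is literally the single line ``By monotonicity (Lemma~\ref{lem:monotonic}) and maximality,'' whereas you spell the inference out in full. In particular, your treatment of the boundary case $x=9$ is a genuine addition: the paper (both here and in the proof of Lemma~\ref{lem:invariant-general}) asserts ``maximality of $x$ implies $\Phi_e(R_k,x+1)>\Delta_k'$'' without remarking that maximality over $\{0,\dots,9\}$ alone does not preclude $\Phi_e(R_k,10)\le\Delta_k'$; your use of the inductive hypothesis $\Delta_k\le(R_k+1)^e-R_k^e-1$ to force infeasibility of $x=10$ is exactly what is needed to close that gap, and your computation is correct. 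So the route is the same, but your version is more rigorous than the paper's on this point.
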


\begin{proof}
    By monotonicity (Lemma~\ref{lem:monotonic}) and maximality.
\end{proof}

We now present the digit-by-digit binomial algorithm for detecting perfect
$e$-th powers and computing integer $e$-th roots. Its correctness and
termination properties are established in the main theorem that follows.
\subsection{Statement of the Algorithm and Main Result}
\begin{algorithm}[H]
    \caption{Digit-by-Digit Exact \(e\)-th Power Detection}
    \label{alg:perfect-power}
    \begin{algorithmic}[1]
        \Require Integer \(N \ge 0\), exponent \(e \ge 2\)
        \Ensure \(R\) if \(N = R^e\), else \texttt{No root}
        \State Decompose \(N\) into base-\(10^e\) blocks \(a_0, \dots, a_{m-1}\) (highest first, pad if needed)
        \State \(R \gets \lfloor a_0^{1/e} \rfloor\)
        \State \(\Delta \gets a_0 - R^e\)
        \For{$k = 1$ to $m-1$}
        \State \(\Delta \gets 10^e \Delta + a_k\)
        \State Find largest \(x \in \{0,\dots,9\}\) s.t. \((10R + x)^e - (10R)^e \le \Delta\)
        \State \(\Delta \gets \Delta - [(10R + x)^e - (10R)^e]\)
        \State \(R \gets 10R + x\)
        \EndFor
        \If{$\Delta = 0$}
        \State \Return \(R\)
        \Else
        \State \Return \texttt{No root}
        \EndIf
    \end{algorithmic}
\end{algorithm}
The following theorem establishes the correctness and termination of
Algorithm~\ref{alg:perfect-power}.

\label{sec:main-theorem}
\begin{theorem}[Exact \(e\)-th Power Detection and Root Computation]
    \label{thm:perfect-power-detection}
    Let \(N \in \mathbb{N}\) and let \(e \ge 2\) be a fixed integer.
    The digit-by-digit binomial algorithm terminates after finitely many steps and:
    \begin{enumerate}
        \item Computes \(R = \lfloor N^{1/e} \rfloor\).
        \item Outputs \(R\) if \(N = R^e\), otherwise indicates no root.
    \end{enumerate}
    Consequently, the algorithm exactly detects perfect \(e\)-th powers.
\end{theorem}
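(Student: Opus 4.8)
The plan is to read Theorem~\ref{thm:perfect-power-detection} as a corollary of the Remainder Invariant (Lemma~\ref{lem:invariant-general}) evaluated after the last block has been consumed, with the real work residing in (i) a termination count and (ii) checking that Algorithm~\ref{alg:perfect-power} genuinely satisfies the invariant's hypotheses — in particular that each selected $x$ is an honest decimal digit. I would present the argument in three parts: termination, the computation of $\lfloor N^{1/e}\rfloor$, and the perfect-power decision.

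\textbf{Termination.} First I would observe that the base-$10^e$ decomposition of $N$ produces a finite list $a_0,\dots,a_{m-1}$ with $m=\lceil L/e\rceil$, where $L$ is the number of decimal digits of $N$, so the \textbf{for} loop executes exactly $m-1$ times. The initialisation $R\gets\lfloor a_0^{1/e}\rfloor$ is itself a bounded computation: since $0\le a_0<10^e$, one obtains $R_0$ by incrementing a counter $c$ and comparing $c^e$ with $a_0$, halting in at most ten steps using only multiplication and comparison. Within each loop iteration the digit search ranges over the finite set $\{0,\dots,9\}$, each test being an evaluation of the fixed-degree polynomial $\Phi_e(R,x)$ followed by a comparison, and the remaining updates are one multiplication by $10^e$, one subtraction, and one assignment $R\gets 10R+x$. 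Hence every step is a decidable test on integers and the whole procedure halts after $O(m)$ iterations; this also records the constructive content, since all predicates used are decidable (in)equalities of integers with an explicitly bounded running time.

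\textbf{Computation of the integer root.} Before invoking the invariant I would close the one genuine gap: that the largest nonnegative integer $x$ with $\Phi_e(R_k,x)\le\Delta_k'$ actually lies in $\{0,\dots,9\}$, so that the algorithm's restricted search returns the true optimum. From Lemma~\ref{lem:invariant-general} at stage $k$ we have $\Delta_k\le N_{\le k}-R_k^e<(R_k+1)^e-R_k^e$, hence $\Delta_k\le(R_k+1)^e-R_k^e-1$ (integrality), and therefore
\[
\Delta_k'=10^e\Delta_k+a_{m-k-1}\le 10^e\bigl((R_k+1)^e-R_k^e-1\bigr)+(10^e-1)=(10R_k+10)^e-(10R_k)^e-1<\Phi_e(R_k,10);
\]
by the strict monotonicity of $\Phi_e(R_k,\cdot)$ (Lemma~\ref{lem:monotonic}) the optimal $x$ satisfies $x\le 9$. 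With this in hand, Lemmas~\ref{lem:invariant-general} and~\ref{lem:digit-selection-general} apply verbatim, and after the last block has been consumed (so that the relevant $N_{\le \bullet}$ equals $N$, up to the cosmetic relabelling between the index range $a_0,\dots,a_m$ used in Definition~\ref{def:partial-N-general} and the range $a_0,\dots,a_{m-1}$ used in the algorithm) we obtain $N=R^e+\Delta$ with $\Delta\ge 0$ and $R^e\le N<(R+1)^e$. Since every update preserves $R\in\mathbb{N}$ and $t\mapsto t^e$ is strictly increasing on $[0,\infty)$, $R$ is the unique nonnegative integer satisfying this two-sided bound, i.e.\ $R=\lfloor N^{1/e}\rfloor$, which proves part~(1).

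\textbf{Perfect-power decision and the main obstacle.} For part~(2): if the final remainder $\Delta=0$ then $N=R^e$, so $N$ is a perfect $e$-th power and $R$ is an explicit witnessing root, which the algorithm returns. If $\Delta>0$ then $R^e<N<(R+1)^e$; any integer $s$ with $s^e=N$ would satisfy $s=\lfloor N^{1/e}\rfloor=R$, contradicting $R^e<N$, so no $e$-th root exists and \texttt{No root} is the correct output. Thus the algorithm is an exact, constructive decision procedure for perfect $e$-th powers, exhibiting a witness whenever one exists and relying only on the decidable predicate $\Delta=0$. I expect the only real point of care — the ``main obstacle'' — to be the single-digit verification above: without the bound $\Delta_k'<\Phi_e(R_k,10)$ the maximality step used in the proof of Lemma~\ref{lem:invariant-general} (which compares against $\Phi_e(R_k,x+1)$, with $x+1$ possibly equal to $10$) is not literally justified, so this estimate must be supplied to make the chain of cited lemmas airtight; everything else is bookkeeping — finiteness of the loop, integrality and nonnegativity of $R$, and reconciliation of the block indices.
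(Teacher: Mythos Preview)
Your argument is correct and follows the same route as the paper: termination via the finite block count, correctness of $R=\lfloor N^{1/e}\rfloor$ and the perfect-power decision via the Remainder Invariant (Lemma~\ref{lem:invariant-general}) together with optimal digit selection (Lemma~\ref{lem:digit-selection-general}); the paper's own proof is essentially a three-line summary of exactly these ingredients. Where you go beyond the paper is in isolating and closing a genuine gap in the proof of Lemma~\ref{lem:invariant-general}: when $x=9$, the claim ``maximality of $x$ implies $\Phi_e(R_k,x{+}1)>\Delta_k'$'' does not follow merely from searching over $\{0,\dots,9\}$, and your inequality $\Delta_k'\le (10R_k+10)^e-(10R_k)^e-1<\Phi_e(R_k,10)$, derived from the inductive bound $\Delta_k<(R_k+1)^e-R_k^e$, is precisely what is needed to make that step airtight. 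The paper silently relies on this bound without stating it, so your treatment is strictly more rigorous while remaining the same proof in spirit.
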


\begin{proof}
    Termination follows from finite blocks (Theorem~\ref{thm:termination-large-e}).
    Correctness and detection follow from the remainder invariant (Lemma~\ref{lem:invariant-general}) and optimal digit selection (Lemma~\ref{lem:digit-selection-general}).
    At termination: \(N = R^e + \Delta\) with \(\Delta \ge 0\) and \(R^e \le N < (R+1)^e\), so \(R = \lfloor N^{1/e} \rfloor\).
    \(\Delta = 0\) iff \(N = R^e\).
\end{proof}

\subsection{Fractional \(e\)-th Root Approximation via Interval Refinement}
\label{subsec:fractional-refinement}

When exact digit-by-digit fractional extraction with non-revisable digits is not possible (\(e \ge 3\)), the following monotone interval refinement provides a bounded-error approximation, building on the exact integer root from Theorem~\ref{thm:perfect-power-detection}.

\begin{lemma}[Initial Containing Interval]
    \label{lem:initial-interval}
    There exist real numbers \(L_0 \ge 0\) and \(U_0 > L_0\) such that
    \[
    L_0^e \le N < U_0^e
    \]
    and \(U_0 - L_0 = 1\).
\end{lemma}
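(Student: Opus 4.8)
The plan is to recycle the integer $e$-th root that has already been produced and proved correct, and simply repackage its defining bound into interval form. Concretely, I would set $L_0 := \lfloor N^{1/e} \rfloor$ and $U_0 := L_0 + 1$. Since $N \ge 0$, the integer root satisfies $L_0 \ge 0$, which discharges the non-negativity requirement; and $U_0 - L_0 = 1 > 0$ simultaneously yields both $U_0 > L_0$ and the prescribed unit width, so two of the three conditions are immediate from the choice of endpoints alone.

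For the containment $L_0^e \le N < U_0^e$, I would invoke the termination bound $R^e \le N < (R+1)^e$ established in Theorem~\ref{thm:perfect-power-detection} (equivalently, Lemma~\ref{lem:invariant-general} evaluated at the final step $k = m-1$), applied with $R = L_0$. The left inequality is then exactly $L_0^e \le N$, and substituting $U_0 = L_0 + 1$ into the right inequality gives $N < U_0^e$. Thus all three asserted properties hold. I would also isolate the degenerate case $N = 0$ explicitly, where $L_0 = 0$, $U_0 = 1$, and the chain $0 = 0^e \le 0 < 1 = 1^e$ holds, so the claim is not vacuous at the boundary.

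I do not expect any genuine obstacle: the lemma is essentially a restatement of the integer-root bound, and all of its content is carried by the already-proven Theorem~\ref{thm:perfect-power-detection}. The only point worth emphasizing for the paper's constructive stance is that $L_0$ is not merely asserted to exist but is \emph{computed} by Algorithm~\ref{alg:perfect-power}; hence $L_0$ and $U_0$ are furnished as explicit algorithmic witnesses, keeping the lemma consistent with the Bishop-style framework used throughout.
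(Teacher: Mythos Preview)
Your proposal is correct and matches the paper's own proof essentially line for line: the paper likewise sets $L_0 = R = \lfloor N^{1/e} \rfloor$ (computed by Algorithm~\ref{alg:perfect-power} via Theorem~\ref{thm:perfect-power-detection}) and $U_0 = R+1$, then reads off $R^e \le N < (R+1)^e$. Your explicit treatment of $N=0$ and the emphasis on the constructive witness are welcome additions but do not diverge from the paper's approach.
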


\begin{proof}
    Let \(L_0 = R = \lfloor N^{1/e} \rfloor\), computed exactly by the digit-by-digit algorithm (Theorem~\ref{thm:perfect-power-detection}). Set \(U_0 = R + 1\). Then
    \[
    R^e \le N < (R + 1)^e,
    \]
    by definition of the floor function.
\end{proof}

We define a sequence of intervals \([L_k, U_k]\) inductively.

Invariant at step \(k\):
\[
L_k^e \le N < U_k^e, \quad L_k \ge 0, \quad U_k = L_k + \delta_k
\]
for some width \(\delta_k > 0\).

\begin{itemize}
    \item Initialization (\(k=0\)): Set \(L_0 = R\), \(U_0 = R + 1\), \(\delta_0 = 1\). The invariant holds by Lemma~\ref{lem:initial-interval}.
    \item Refinement step (\(k \to k+1\)):
    Choose \(\delta_{k+1} > 0\) with \(\delta_{k+1} \le \delta_k / 10\).
    Set \(L_{k+1} = L_k + m \cdot \delta_{k+1}\), where \(m\) is the largest integer such that
    \[
    (L_k + m \cdot \delta_{k+1})^e \le N.
    \]
    Set \(U_{k+1} = L_{k+1} + \delta_{k+1}\).
\end{itemize}

\begin{lemma}[Preservation of Invariant]
    \label{lem:preservation-interval}
    After each refinement step,
    \[
    L_{k+1}^e \le N < U_{k+1}^e
    \]
    and
    \[
    U_{k+1} - L_{k+1} = \delta_{k+1} \le \frac{1}{10} \delta_k.
    \]
\end{lemma}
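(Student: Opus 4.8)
The plan is to read off all three assertions directly from the definition of the refinement step $k \to k+1$, invoking the inductive invariant at step $k$ together with the strict monotonicity of the map $t \mapsto t^e$ on $[0,\infty)$ (valid since $e \ge 2$). No identity beyond the construction itself is needed; the work is entirely bookkeeping of strict versus non-strict inequalities.

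First I would check that the integer $m$ appearing in the refinement step is genuinely well-defined. Consider $S := \{\, j \in \mathbb{Z}_{\ge 0} : (L_k + j\delta_{k+1})^e \le N \,\}$. The set $S$ is nonempty because $0 \in S$: this is exactly the left half of the inductive invariant, $L_k^e \le N$. It is bounded above because $L_k + j\delta_{k+1} \to \infty$ as $j \to \infty$ (using $\delta_{k+1} > 0$), so $(L_k + j\delta_{k+1})^e$ eventually exceeds $N$. Hence $m = \max S$ exists; since $m \ge 0$ and $\delta_{k+1} > 0$, we get $L_{k+1} = L_k + m\delta_{k+1} \ge L_k \ge 0$, so the nonnegativity required by the invariant is preserved.

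Next I would verify the two-sided bound. The lower bound $L_{k+1}^e \le N$ is precisely the statement $m \in S$. For the upper bound, maximality of $m$ gives $m+1 \notin S$, i.e. $\bigl(L_k + (m+1)\delta_{k+1}\bigr)^e > N$; since $U_{k+1} = L_{k+1} + \delta_{k+1} = L_k + (m+1)\delta_{k+1}$ by construction, this is exactly $U_{k+1}^e > N$. Together these give $L_{k+1}^e \le N < U_{k+1}^e$, the first claim. For the second claim, the identity $U_{k+1} - L_{k+1} = \delta_{k+1}$ is immediate from the definition $U_{k+1} := L_{k+1} + \delta_{k+1}$, and $\delta_{k+1} \le \tfrac{1}{10}\delta_k$ is simply the constraint imposed when $\delta_{k+1}$ is selected.

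There is no substantive obstacle here: the only points demanding care are the well-definedness of the maximum $m$ (existence uses the inductive hypothesis $L_k^e \le N$; boundedness uses $\delta_{k+1} > 0$ and $e \ge 2$) and keeping the strict versus non-strict inequalities aligned when passing between $L_{k+1}, U_{k+1}$ and their $e$-th powers — non-strict on the lower side by the choice of $m$, strict on the upper side by its maximality. If desired one can additionally remark that $L_k \le L_{k+1}$, and from $L_{k+1}^e \le N < U_k^e$ with strict monotonicity that $L_{k+1} < U_k$, so the refinement never pushes the lower endpoint past the previous upper endpoint; but this nesting observation is not required for the stated lemma.
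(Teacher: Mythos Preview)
Your proof is correct and follows the same approach as the paper's: both read the lower bound off the definition of $m$, the strict upper bound off the maximality of $m$, and the width relation off the explicit choice $U_{k+1}=L_{k+1}+\delta_{k+1}$ with $\delta_{k+1}\le\delta_k/10$. Your version is more thorough in that you explicitly verify the well-definedness of $m$ (nonemptiness via the inductive hypothesis, boundedness via $\delta_{k+1}>0$), which the paper leaves implicit.
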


\begin{proof}
    By construction, \(L_{k+1}^e \le N\) (maximal \(m\)).
    By maximality, \((L_{k+1} + \delta_{k+1})^e > N\).
    Hence \(L_{k+1}^e \le N < U_{k+1}^e\). The width reduction follows from the choice of \(\delta_{k+1}\).
\end{proof}

\begin{theorem}[Correct Fractional \(e\)-th Root Computation]
    \label{thm:fractional-nth-root}
    Let \(N \in \mathbb{N}\) and let \(e \ge 2\) be an integer.
    For any prescribed precision \(\varepsilon > 0\), the interval refinement algorithm computes a real number \(\hat{x}\) such that
    \[
    |\hat{x} - N^{1/e}| < \varepsilon.
    \]
    Moreover, the algorithm terminates after finitely many steps using only monotone interval containment and exact powering operations.
\end{theorem}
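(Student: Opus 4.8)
The plan is to run the interval refinement for a number of steps $k^\ast$ chosen from $\varepsilon$, and return $\hat{x} = L_{k^\ast}$. First I would establish, by induction on $k$, that the stated invariant $L_k^e \le N < U_k^e$ with $U_k = L_k + \delta_k$ holds for every $k \ge 0$: the base case is exactly Lemma~\ref{lem:initial-interval}, and the inductive step is exactly Lemma~\ref{lem:preservation-interval}. Since $\delta_0 = 1$ and $\delta_{k+1} \le \delta_k/10$, a trivial secondary induction gives $\delta_k \le 10^{-k}$, so the interval widths shrink geometrically.

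Next I would convert the interval containment into a bound on the target value. Because $t \mapsto t^e$ is strictly increasing on $[0,\infty)$, the inequalities $L_k^e \le N < U_k^e$ yield $L_k \le N^{1/e} < U_k$, that is, $N^{1/e} \in [L_k, U_k)$. Hence
\[
0 \le N^{1/e} - L_k < U_k - L_k = \delta_k \le 10^{-k}.
\]
The key constructive point is that this step never requires computing the possibly irrational number $N^{1/e}$ itself; it uses only the monotone containment produced by the algorithm. Given $\varepsilon > 0$, I would then pick any integer $k^\ast$ with $10^{-k^\ast} < \varepsilon$ (for small $\varepsilon$ one may take $k^\ast = \lfloor \log_{10}(1/\varepsilon)\rfloor + 1$), run $k^\ast$ refinement steps, and output $\hat{x} = L_{k^\ast}$; the display above immediately gives $|\hat{x} - N^{1/e}| < 10^{-k^\ast} < \varepsilon$.

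For termination I would argue that each refinement step terminates and that only $k^\ast$ of them are performed. Fixing $\delta_{k+1} = \delta_k/10 = 10^{-(k+1)}$ makes every $L_k$ a rational number with denominator $10^k$, so each candidate quantity $(L_k + m\cdot 10^{-(k+1)})^e$ is rational with denominator $10^{e(k+1)}$, and the test $(L_k + m\,\delta_{k+1})^e \le N$ is a decidable comparison of integers after clearing denominators — no real-number arithmetic is involved. Moreover the search for the largest admissible $m$ is a priori bounded: since $L_k + 10\,\delta_{k+1} = U_k$ and $U_k^e > N$ by the invariant, we must have $m \le 9$, so each step inspects at most ten candidates. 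The whole computation is therefore $k^\ast$ steps, each a constant-size loop of exact integer powerings and comparisons, and it halts.

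The two inductions and the floor-function manipulation are routine; the only part deserving genuine care is the constructive bookkeeping — verifying that every decision $(L_k + m\delta_{k+1})^e \le N$ is taken on rationals (equivalently integers) rather than on the transcendental target $N^{1/e}$, and that the per-step search is bounded by the already-established overshoot $U_k^e > N$ rather than merely ``eventually terminating.'' I do not expect an obstacle beyond this.
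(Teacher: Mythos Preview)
Your proposal is correct and follows essentially the same route as the paper's own proof: invoke the invariant via Lemmas~\ref{lem:initial-interval} and~\ref{lem:preservation-interval}, use monotonicity of $t\mapsto t^e$ to deduce $L_k \le N^{1/e} < U_k$, bound $\delta_k \le 10^{-k}$, choose $K$ with $10^{-K}<\varepsilon$, and output $\hat{x}=L_K$. If anything, your termination argument is more carefully spelled out than the paper's---you make explicit that fixing $\delta_{k+1}=10^{-(k+1)}$ keeps every $L_k$ rational so the comparisons reduce to integer tests, and you derive the bound $m\le 9$ from the already-known overshoot $U_k^e>N$, whereas the paper simply asserts ``$j\le 9$ (decimal case)'' and ``exact integer operations'' without further justification.
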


\begin{proof}
    Let \(\alpha = N^{1/e}\). The function \(f(x) = x^e\) is strictly increasing on \([0,\infty)\).
    The invariant (preserved by Lemma~\ref{lem:preservation-interval}) ensures \(L_k \le \alpha < U_k\).
    Since \(\delta_k \le 10^{-k}\) (starting from \(\delta_0 = 1\)), after \(K\) steps with \(10^{-K} < \varepsilon\), set \(\hat{x} = L_K\). Then
    \[
    0 \le \alpha - \hat{x} < \delta_K < \varepsilon,
    \]
    so \(|\hat{x} - N^{1/e}| < \varepsilon\).

    Each step computes powers \((L_k + j \delta_{k+1})^e\) for \(j \le 9\) (decimal case) and compares to \(N\) — exact integer operations. The geometric width reduction ensures finite steps (\(K = \lceil -\log_{10} \varepsilon \rceil\)).
\end{proof}

\begin{remark}
    Unlike the square root case (\(e=2\)), where digits are exact and non-revisable (Corollary~\ref{cor:unique-fractional}), this refinement provides only bounded approximation for \(e \ge 3\).
\end{remark}
\begin{remark}

The proof above rigorously establishes that the digit-by-digit binomial method is exact for integer \(n\)-th roots and perfect power testing.
This exactness, achieved solely through integer arithmetic and invariant preservation, distinguishes it from approximate methods and makes it invaluable for verified computation and theoretical analysis.
\end{remark}

\begin{corollary}[Uniqueness of Exact Fractional Digit Extraction]
    \label{cor:unique-fractional}
    Exact digit-by-digit fractional root extraction producing non-revisable decimal digits is possible only for the case \(e = 2\).
\end{corollary}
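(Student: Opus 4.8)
\smallskip
\noindent\textbf{Proof proposal.}
The plan is to treat the two directions separately. For $e=2$ the positive direction is already in hand: Theorems~\ref{thm:exact-digits} and~\ref{thm:exactness-fractional} give $R_{k}=\lfloor 10^{k}\sqrt{N}\rfloor$ at every step, so each emitted digit coincides with the corresponding digit of the true decimal expansion of $\sqrt{N}$ and is never revised, while Theorem~\ref{thm:error-bound} supplies the matching $10^{-k}$ truncation bound. Hence the entire substance of the corollary is the impossibility assertion for $e\ge 3$, and that is where essentially all of the work lies.

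For $e\ge 3$, I would first isolate the structural feature that makes digit stability possible when $e=2$. Emitting one root digit multiplies the radicand by $10^{2}$, and this is exactly absorbed by the scaling $R\mapsto 10R$ because $(10R)^{2}=10^{2}R^{2}$; moreover the increment $(10R+x)^{2}-(10R)^{2}=(20R+x)x$ factors as an expression affine in the \emph{single} carried quantity $d=2R$ times $x$, and $d$ obeys the purely local update $d\mapsto 10d+2x$ under $R\mapsto 10R+x$. The claim is that no such self-contained, digit-stable state survives for $e\ge 3$. Concretely I would (i) fix a precise notion of a digit-by-digit fractional scheme --- a procedure maintaining a bounded-arity auxiliary state, emitting root digits one at a time so that $R_{k}^{e}\le N\cdot 10^{ek}<(R_{k}+1)^{e}$, with a state-update rule whose arithmetic does not grow with $N$ --- and (ii) show that for $e\ge 3$ the increment $\Phi_{e}(R,x)=\sum_{j=1}^{e}\binom{e}{j}(10R)^{e-j}x^{j}$ (Definition~\ref{def:binomial-increment}) genuinely involves the powers $R,R^{2},\dots,R^{e-1}$, and that under $R\mapsto 10R+x$ these powers transform with $x$- and $R$-dependent cross terms (already $R^{2}\mapsto 100R^{2}+20Rx+x^{2}$), so that no fixed-size state closed under the per-digit update can reproduce $\Phi_{e}$. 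This is the ``nonlinear coupling'' mentioned in Section~\ref{sec:introduction}; I would make it quantitative by showing that the number of bits needed to carry the auxiliary state forward between consecutive digits is $\Theta(\log R_{k})$ for $e\ge 3$ against $O(1)$ for $e=2$.

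The main obstacle, I expect, is exactly step (i): formalizing the impossibility in a way that is faithful to the informal notion of ``digit-by-digit with non-revisable digits'' and yet genuinely fails for $e\ge 3$. The delicate point is that the binomial increment already yields, for \emph{every} $e$, an algorithm whose emitted digits are never revised --- one simply carries the full remainder $\Delta_{k}$ and the full partial root $R_{k}$, exactly as in Lemma~\ref{lem:invariant-general} applied to blocks of $e$ zeros --- so what breaks for $e\ge 3$ is not correctness or non-revisability as such, but the existence of the bounded, digit-locally updated state that is available when $e=2$. The corollary must therefore be read, and its proof framed, as a statement about digit stability under the base-$10^{e}$ scaling rather than as a denial that correct non-revisable digits can be produced at all; distilling that distinction into a crisp lemma, and then closing the lemma by the cross-term / bit-growth argument sketched above, is the crux. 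Once it is in place, Theorem~\ref{thm:fractional-nth-root} is precisely the fallback it licenses for $e\ge 3$.
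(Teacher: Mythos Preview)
Your treatment of the $e=2$ direction matches the paper exactly: both invoke Theorems~\ref{thm:exact-digits} and~\ref{thm:error-bound} and the linearity of $(20R+x)x$.

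For $e\ge 3$, however, your proposal is considerably more careful than the paper's own argument. The paper's proof is essentially the heuristic you are trying to make rigorous: it simply observes that $\Phi_{e}(R,x)$ has degree $e$ in $x$, asserts that ``higher-degree terms introduce nonlinear coupling'' which ``disrupts digit stability'' under scaling by $10^{e}$, and stops there. There is no precise model of a digit-by-digit scheme, no bounded-state restriction, and no actual demonstration that a previously emitted digit can become suboptimal. Your observation --- that carrying the full $(R_{k},\Delta_{k})$ and appending $e$ zeros per step reproduces the invariant of Lemma~\ref{lem:invariant-general} and hence yields $R_{k}=\lfloor 10^{k}N^{1/e}\rfloor$ with genuinely non-revisable digits for \emph{every} $e$ --- is correct, and it exposes that the corollary as literally worded is not what is actually being claimed. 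The paper never confronts this; it implicitly means something like ``with a local, $e=2$-style single-accumulator update'' but never says so.

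So your instinct that step~(i) is the crux is exactly right, and you should know that the paper does not carry it out. If your goal is to match the paper, the heuristic paragraph about nonlinear coupling \emph{is} the whole proof; if your goal is a defensible argument, your bounded-state / bit-growth formalization is the right direction and would constitute a genuine strengthening of what the paper provides.
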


\begin{proof}
    For \(e = 2\), the increment function simplifies to
    \[
    \Phi_2(R, x) = (20R + x)x,
    \]
    which is linear in the update form after scaling the remainder by 100. This linearity, combined with maximal digit selection, ensures that previously computed digits remain optimal and unaffected by subsequent steps, as rigorously established in Theorems~\ref{thm:exact-digits} and~\ref{thm:error-bound}.

    For \(e \ge 3\), the increment \(\Phi_e(R, x)\) is a polynomial of degree \(e\) in \(x\):
    \[
    \Phi_e(R, x) = \sum_{k=1}^{e} \binom{e}{k} (10R)^{e-k} x^k.
    \]
    The higher-degree terms introduce nonlinear coupling between the partial root \(R\) and the trial digit \(x\). When the remainder is scaled by \(10^e\) to bring down the next block, this nonlinearity disrupts digit stability: later digits can influence the optimality of earlier ones, preventing the guarantee of non-revisable digits. No analogous linear structure exists to preserve digit finality under scaling, as shown by the general invariant framework in Theorem~\ref{thm:perfect-power-detection}.
\end{proof}
\begin{remark}
    Unlike the square root case (\(e=2\)), where digits are exact and non-revisable (Corollary~\ref{cor:unique-fractional}), the digit-by-digit method for \(e \ge 3\) does not produce final fractional digits. However, it guarantees a correct integer root and perfect power detection, with fractional approximation achievable via monotone interval refinement with provable error bounds (Theorem~\ref{thm:fractional-nth-root}).
\end{remark}

\subsection{Termination for Arbitrarily Large Exponents}
\label{subsec:termination-large-e}

In the context of Theorem~\ref{thm:perfect-power-detection}, a natural concern is whether the algorithm remains practical when the exponent \(e\) becomes very large. We prove that the algorithm \emph{always terminates after finitely many steps}, with an explicit bound independent of the computational cost per iteration.

\begin{theorem}[Termination for Arbitrary Exponent \(e\)]
    \label{thm:termination-large-e}
    Let \(N \in \mathbb{N}\) be fixed and let \(e \ge 2\) be any integer. The digit-by-digit binomial algorithm terminates after finitely many steps.
\end{theorem}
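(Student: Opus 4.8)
The plan is to reduce termination to the finiteness of the base-$10^e$ expansion of $N$. First I would observe that for $N \ge 1$ the decomposition in Algorithm~\ref{alg:perfect-power} produces exactly $m = \lfloor \log_{10^e} N \rfloor + 1$ blocks $a_0,\dots,a_{m-1}$ (with $N = 0$ handled trivially, the algorithm setting $R = 0$, $\Delta = 0$ and returning at once). Since $m$ is a fixed finite integer determined solely by $N$ and $e$, and in particular is fixed in advance and does not depend on any digit $x$ computed during execution, the \texttt{for} loop executes exactly $m - 1$ iterations, after which the algorithm performs a single test $\Delta = 0$ and returns. Hence it suffices to show that each loop iteration completes in finitely many elementary steps.

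Next I would analyze one iteration of the loop body. It consists of: (i) the update $\Delta \gets 10^e \Delta + a_k$, a single integer multiplication and addition; (ii) the search for the largest $x \in \{0,\dots,9\}$ with $(10R+x)^e - (10R)^e \le \Delta$, which ranges over at most ten candidate values of $x$, each requiring the evaluation of the integer power $(10R+x)^e$ — a fixed finite number of integer multiplications because $e$ is a \emph{fixed} exponent — followed by one comparison; and (iii) the updates $\Delta \gets \Delta - [(10R+x)^e - (10R)^e]$ and $R \gets 10R + x$, again a bounded number of integer operations. By Lemma~\ref{lem:monotonic}, $\Phi_e(R,\cdot)$ is strictly increasing, and by Lemma~\ref{lem:invariant-general} the invariant $\Delta_k \ge 0$ is maintained, so $x = 0$ is always feasible; the search therefore never fails and halts after at most ten evaluations.

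Combining these observations, the total number of elementary integer operations performed by the algorithm is bounded by a quantity depending only on $m$ and $e$ — on the order of $m-1$ iterations, each doing $O(e)$ integer multiplications per candidate digit over ten candidates — which is finite. I would close by addressing explicitly the one point a reader might flag: for very large $e$ the integers $(10R+x)^e$ grow, and the bit-complexity per iteration grows with them, but the theorem asserts only termination, not a uniform or polynomial-time bound; since $e$ is fixed, each exponentiation is an ordinary finite integer computation, so every iteration still halts. The "hard part" here is therefore expository rather than mathematical: there is no genuine obstacle, only the need to make precise that a fixed $e$ renders the powering step a finite subroutine and that the loop count $m-1$ is determined by the input before execution begins, independent of the digit values produced along the way.
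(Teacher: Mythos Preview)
Your proposal is correct and follows essentially the same approach as the paper: both arguments bound the number of blocks by a finite quantity determined by the decimal length of $N$ and $e$ (the paper writes $m = \lceil d/e \rceil$ with $d = \lfloor \log_{10} N \rfloor + 1$, which agrees with your $\lfloor \log_{10^e} N \rfloor + 1$), observe that the \texttt{for} loop therefore runs a predetermined finite number of times, and note that each iteration performs at most ten bounded evaluations of the binomial increment. Your write-up is simply more explicit than the paper's---you separately handle $N=0$, invoke Lemmas~\ref{lem:monotonic} and~\ref{lem:invariant-general} to justify that the digit search never fails, and address the bit-growth concern---but there is no substantive difference in the argument.
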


\begin{proof}
    Group the decimal digits of \(N\) into blocks of size \(e\) (padding with leading zeros if necessary). Let \(d = \lfloor \log_{10} N \rfloor + 1\) be the number of digits. The number of blocks is
    \[
    m = \left\lceil \frac{d}{e} \right\rceil.
    \]
    Since \(d\) is fixed and \(e\) finite, \(m\) is finite. The algorithm processes exactly \(m\) blocks (initialization + \(m-1\) iterations), each with bounded operations (at most 10 binomial evaluations). Thus, termination is guaranteed.
\end{proof}

\begin{lemma}[Bound on Number of Iterations]
    \label{lem:iteration-bound}
    The digit-by-digit algorithm performs at most
    \[
    m = \left\lceil \frac{\lfloor \log_{10} N \rfloor + 1}{e} \right\rceil
    \]
    block-processing steps (including initialization).
\end{lemma}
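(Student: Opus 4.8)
The plan is to read off the iteration count directly from the base-$10^e$ block decomposition used by Algorithm~\ref{alg:perfect-power}, reusing the counting already performed in the proof of Theorem~\ref{thm:termination-large-e}. First I would dispose of the trivial case $N = 0$: here the decomposition yields the single block $a_0 = 0$, the algorithm performs only the initialization, and $m = 1$ agrees with $\lceil 1/e \rceil = 1$. For $N \ge 1$, I would set $d = \lfloor \log_{10} N \rfloor + 1$, which is precisely the number of decimal digits of $N$, and recall that forming base-$10^e$ blocks (Definition~\ref{def:partial-N-general}) groups these $d$ digits into consecutive blocks of $e$ digits each, padding the most significant block with leading zeros when $e \nmid d$. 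Hence the number of blocks produced is exactly
\[
m = \left\lceil \frac{d}{e} \right\rceil = \left\lceil \frac{\lfloor \log_{10} N \rfloor + 1}{e} \right\rceil .
\]

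Next I would match this block count to the control flow of Algorithm~\ref{alg:perfect-power}: the algorithm performs one initialization step consuming the highest block $a_0$ (computing $R \gets \lfloor a_0^{1/e}\rfloor$ and $\Delta \gets a_0 - R^e$), followed by the \textbf{for} loop over $k = 1, \dots, m-1$, which consumes the remaining $m-1$ blocks one per iteration. Thus the total number of block-processing steps, counting initialization, is $1 + (m-1) = m$, after which no blocks remain and the algorithm halts (the final $\Delta = 0$ test performs no further block processing). This yields the stated bound, which is in fact attained with equality.

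I expect the only real subtlety to be bookkeeping rather than mathematics: one must check that the indexing convention ``$a_0$ is the highest block'' used in Algorithm~\ref{alg:perfect-power} is consistent with the ``highest block $a_m$'' convention of Definition~\ref{def:partial-N-general}, and that leading-zero padding never creates a spurious extra block — a padded most-significant block is still one of the $\lceil d/e\rceil$ blocks, not an additional one. Once these conventions are aligned, the equality $m = \lceil (\lfloor \log_{10} N \rfloor + 1)/e \rceil$ is immediate, and since the statement only asserts ``at most $m$'', nothing further is required; the per-step cost bound (at most ten binomial evaluations) noted in Theorem~\ref{thm:termination-large-e} is orthogonal to this count and need not be invoked.
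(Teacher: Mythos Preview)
Your proposal is correct and follows essentially the same argument as the paper: compute the digit count $d = \lfloor \log_{10} N \rfloor + 1$, observe that grouping into size-$e$ blocks yields $\lceil d/e \rceil$ blocks, and note that the algorithm consumes one block per step. Your version is in fact more careful than the paper's, since you explicitly dispose of the $N=0$ case (where $\log_{10} N$ is undefined) and reconcile the indexing conventions, whereas the paper's proof is a terse three-line version of the same idea.
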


\begin{proof}
    Let \(d = \lfloor \log_{10} N \rfloor + 1\) be the number of decimal digits of \(N\). Grouping into blocks of size \(e\) (with leading zero padding if necessary) yields exactly \(m = \lceil d / e \rceil\) blocks. The algorithm processes one block per iteration.
\end{proof}

\begin{remark}
    The number of iterations decreases as \(e\) increases. For \(e > \log_{10} N\), \(N\) fits in a single block, requiring only initialization.
\end{remark}

\subsection{Constructive Interpretation}
\label{subsec:constructive}

The proofs presented are fully effective and satisfy the standards of Bishop-style constructive mathematics:
\begin{itemize}
    \item The existence of the integer \(e\)-th root \(R = \lfloor N^{1/e} \rfloor\) is witnessed by an explicit digit-by-digit construction (Algorithm~\ref{alg:perfect-power}).
    \item Digit selection at each step proceeds by finite search over a bounded domain (\(x \in \{0,\dots,9\}\)), yielding a computable maximal digit.
    \item The number of iterations is bounded by the number of digit blocks in \(N\), which is finite and explicitly computable (Lemma~\ref{lem:iteration-bound}).
    \item Perfect \(e\)-th power detection is decidable: \(N\) is a perfect power if and only if the final remainder \(\Delta = 0\).
\end{itemize}

Consequently, the algorithm is primitive recursive and serves as a direct constructive proof of the existence and uniqueness of the integer \(e\)-th root.
\subsection{Structural Limitations of the Digit-by-Digit Framework}
\label{subsec:limitations}

The digit-by-digit binomial framework is correct for computing the integer part \(\lfloor N^{1/e} \rfloor\) and detecting perfect \(e\)-th powers (Theorem~\ref{thm:perfect-power-detection}).

However, the structural property enabling exact fractional digit-by-digit computation for square roots (\(e=2\)) is absent for \(e \ge 3\). The increment polynomial
\[
\Phi_e(R, x) = (10R + x)^e - (10R)^e
\]
is linear in the update form for \(e=2\) (\(\Phi_2(R,x) = (20R + x)x\)), allowing perfect alignment with decimal scaling and non-revisable digits (Theorems~\ref{thm:exact-digits} and~\ref{thm:error-bound}).

For \(e \ge 3\), \(\Phi_e(R,x)\) is genuinely nonlinear (degree \(e\) in \(x\)), introducing coupling that disrupts digit stability under scaling by \(10^e\). Consequently, exact digit-by-digit fractional extension with non-revisable digits is impossible for \(e \ge 3\) (Corollary~\ref{cor:unique-fractional}).

The combined algorithm addresses this by using interval refinement for bounded fractional approximation when needed.

\section{Time Complexity Analysis and Comparisons}
The following theorem analyses the computational complexity of the combined
digit-by-digit binomial root extraction and interval refinement procedure,
quantifying its dependence on the input size and the desired fractional
precision.

\label{sec:complexity}

\begin{theorem}[Time Complexity of the Combined Algorithm]
    \label{thm:complexity-combined}
    Let \(N\) be a non-negative integer with \(L = \lfloor \log_{10} N \rfloor + 1\) decimal digits, let \(e \ge 2\) be the fixed exponent, and let \(k\) be the number of desired fractional decimal digits.
    The combined digit-by-digit binomial and interval refinement algorithm runs in time
    \[
    O(L \cdot C(L) + k \cdot C(L + k)),
    \]
    where \(C(t)\) is the time complexity of multiplying two \(t\)-digit integers.
    Under schoolbook multiplication (\(C(t) = O(t^2)\)), this simplifies to
    \[
    O(L^3 + k (L + k)^2).
    \]
\end{theorem}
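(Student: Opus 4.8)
The plan is to split the combined procedure into its two constituent phases and bound each separately, then add the costs. \textbf{Phase 1} is the integer-root computation of Algorithm~\ref{alg:perfect-power}; by Lemma~\ref{lem:iteration-bound} it processes $m = \lceil L/e \rceil$ blocks, each requiring at most ten trial evaluations of the binomial increment $\Phi_e(R,x) = (10R+x)^e - (10R)^e$ together with a comparison. \textbf{Phase 2} is the interval-refinement loop of Theorem~\ref{thm:fractional-nth-root}, which gains one decimal digit of precision per step and hence runs for $k$ steps, each performing at most ten trial $e$-th powerings followed by a comparison with $N$. Throughout, $e$ is held fixed, so an $e$-th power costs only $O(1)$ (equivalently $O(\log e)$ by repeated squaring) multiplications and the constants $\binom{e}{j}$ are absorbed into the $O(\cdot)$.

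For Phase 1 I would first argue that every intermediate integer has $O(L)$ digits: the partial root acquires one decimal digit per block, so after all $m$ blocks it has $m = O(L/e)$ digits, and its $e$-th power — hence also each scaled remainder $\Delta_j'$, which stays below $N_{\le j+1} \le N\cdot 10^{e}$ — has $O(L)$ digits. A single evaluation of $\Phi_e(R,x)$ then costs $O(C(L))$: precompute the powers $(10R)^2,\dots,(10R)^{e}$ with $e-1$ multiplications of $O(L)$-digit integers, then form the $e$ summands with $O(1)$ further multiplications by the bounded quantities $\binom{e}{j}$ and $x^{j}\le 9^{e}=O(1)$. With at most ten evaluations per block and $m \le L$ blocks, Phase 1 costs $O(m\cdot C(L)) = O(L\cdot C(L))$.

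For Phase 2 the key reformulation is to run the refinement in exact integer arithmetic by clearing denominators. After $j$ fractional digits the left endpoint has the form $L_j = A_j/10^{\,j}$ with $A_j \in \mathbb{Z}$, and the test $(L_j + t\cdot 10^{-(j+1)})^{e} \le N$ for a trial digit $t\in\{0,\dots,9\}$ is equivalent to the purely integer inequality $(10A_j + t)^{e} \le N\cdot 10^{\,e(j+1)}$. Here $A_j$ has $O(L/e + j)$ digits, so for fixed $e$ both $(10A_j+t)^{e}$ and $N\cdot 10^{e(j+1)}$ have $O(L + j)$ digits; each trial costs $O(C(L+j))$, hence each refinement step costs $O(C(L+j))$. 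Summing over $j = 1,\dots,k$ and using that $C$ is non-decreasing gives $\sum_{j=1}^{k} O(C(L+j)) = O(k\cdot C(L+k))$.

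Adding the two phases yields the claimed bound $O(L\cdot C(L) + k\cdot C(L+k))$, and substituting the schoolbook cost $C(t) = O(t^{2})$ gives $O(L^{3} + k(L+k)^{2})$. I expect the algebra to be routine; the main obstacle is the bookkeeping — verifying that no intermediate integer ever exceeds the stated digit bounds ($O(L)$ in Phase 1, $O(L+k)$ in Phase 2), in particular that the repeated powerings never inflate operand sizes beyond these bounds, and that the reduction of the real-valued interval test to an exact integer inequality is legitimate. It is also worth stating the fixed-$e$ convention explicitly, since otherwise an additional factor depending on $e$ (from the powering and from the magnitude of the $\binom{e}{j}$) would have to be carried through.
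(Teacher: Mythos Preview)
Your proposal is correct and follows essentially the same two-phase decomposition as the paper's own proof: bound Phase~I (integer root) by $O(L\cdot C(L))$ via $O(L)$ blocks each costing $O(C(L))$, bound Phase~II (refinement) by $O(k\cdot C(L+k))$ via $k$ steps on $O(L+k)$-digit operands, then sum. Your treatment is in fact more careful than the paper's, which simply asserts the operand-size bounds; your explicit argument that the real interval test $(L_j + t\cdot 10^{-(j+1)})^e \le N$ reduces to the integer inequality $(10A_j+t)^e \le N\cdot 10^{e(j+1)}$, together with the per-step bound $O(C(L+j))$ and the monotonicity-of-$C$ summation, fills in details the paper leaves implicit.
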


\begin{proof}
    The algorithm has two sequential phases.

    \textbf{Phase I (Exact Integer \(e\)-th Root).}
    Grouping yields \(O(L)\) blocks. Each block tests at most 10 digits, each evaluation costing \(O(C(L))\) (e constant). Total: \(O(L \cdot C(L))\).

    \textbf{Phase II (Fractional Refinement).}
    Each of k digits tests at most 10 additions/powers on \(O(L+k)\)-digit numbers, costing \(O(C(L+k))\) per digit. Total: \(O(k \cdot C(L+k))\).

    Sum: as stated.
\end{proof}
As shown in Theorem~\ref{thm:complexity-combined}, the overall running time is
polynomial in both the input size and the requested precision, with linear
growth in the number of digit blocks.

\subsection{Comparison with Alternative Methods}

The digit-by-digit method is exact and division-free, but linear in L. Alternatives:
\begin{table}[htbp]
    \centering
    \begin{tabularx}{\textwidth}{l c X}
        \toprule
        Method & Time (Integer Root) & Characteristics \\
        \midrule
        Digit-by-Digit (this work)
        & \(O(L \cdot C(L))\)
        & Exact, division-free, constructive \\

        Binary Search
        & \(O\!\big((L/e)\log e \cdot C(L/e)\big)\)
        & Fewer iterations \\

        Newton--Raphson
        & \(O(C(L))\)
        & Quadratically convergent, fastest asymptotically \\
        \bottomrule
    \end{tabularx}
    \caption{Comparison for computing \(\lfloor N^{1/e} \rfloor\).}
    \label{tab:complexity-comparison}
\end{table}

While Newton--Raphson excels in speed, the digit-by-digit approach prioritizes verifiable exactness, pure integer operations, and built-in perfect power detection.

\section{Hardware Implementation Considerations}
\label{sec:hardware}

The digit-by-digit binomial algorithm is particularly well-suited for hardware implementation due to its regular structure, reliance on exact integer operations, and absence of division or floating-point arithmetic.
The following theorem formalises the structural properties of the algorithm
that make it particularly suitable for digital hardware implementation.

\begin{theorem}[Hardware-Friendly Properties]
    \label{thm:hardware-friendly}
        The algorithm requires only the following primitive operations:
    \begin{itemize}
        \item Integer addition and subtraction,
        \item Integer multiplication by small constants (e.g., powers of the base $b=10$ and binomial coefficients),
        \item Digit-wise shifts (multiplication/division by powers of the base),
        \item Comparisons.
    \end{itemize}
    No general division, floating-point units, or transcendental functions are required.
\end{theorem}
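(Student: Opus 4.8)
The plan is to prove Theorem~\ref{thm:hardware-friendly} by a direct line-by-line audit of Algorithm~\ref{alg:perfect-power} (and, for the combined procedure, of the interval-refinement phase of Subsection~\ref{subsec:fractional-refinement}), classifying every elementary operation the algorithm performs into one of the four admissible primitive classes and confirming that no step secretly requires a general division, a floating-point unit, or a transcendental function. Since every quantity flowing through the computation is an integer — the input $N$, the running root $R$, the running remainder $\Delta$, the block coefficients $a_k$, and the trial digits $x \in \{0,\dots,9\}$ — it suffices to show that each assignment in the pseudocode is realized by additions, subtractions, multiplications by a priori bounded constants, shifts (multiplication by powers of the base $b=10$), and comparisons.

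First I would dispatch the two ``suspicious'' preprocessing and initialization lines. The decomposition of $N$ into base-$10^e$ blocks $a_0,\dots,a_{m-1}$ operates on the decimal representation of $N$ by regrouping digits into consecutive chunks of length $e$; equivalently each $a_k = \lfloor N / 10^{e(m-1-k)} \rfloor \bmod 10^e$, but because the divisor is a power of the working base this is a digit-window extraction — a shift together with a truncation — and not a general division. The initialization $R \gets \lfloor a_0^{1/e}\rfloor$ looks like a root extraction, yet $a_0 < 10^e$ forces $\lfloor a_0^{1/e}\rfloor \le 9$, so it is computed by a finite search over $r \in \{0,\dots,9\}$, each candidate tested by the comparison $r^e \le a_0$ with $r^e$ formed by at most $e-1$ integer multiplications; this uses only multiplication and comparison (and, for a hardware target, can be collapsed to a fixed lookup table indexed by $a_0$ and $e$). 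The subtraction $\Delta \gets a_0 - R^e$ is then one subtraction after one more powering.

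Next I would handle the loop body, which is the heart of the matter. The update $\Delta \gets 10^e\Delta + a_k$ is a shift by $e$ decimal places followed by an addition. The digit-selection line requires, for each trial $x\in\{0,\dots,9\}$, the value $\Phi_e(R,x) = (10R+x)^e - (10R)^e = \sum_{j=1}^{e}\binom{e}{j}(10R)^{e-j}x^j$ of Definition~\ref{def:binomial-increment}: here $10R$ is a shift of $R$; the powers $(10R)^{e-j}$ for $j=1,\dots,e$ are obtained by a bounded number, $O(e)$, of integer multiplications ($e$ being a fixed parameter); the quantities $x^j$ with $x\le 9$ and $j\le e$ are bounded constants; and the binomial coefficients $\binom{e}{j}$ are fixed constants. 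Hence $\Phi_e(R,x)$ is assembled from multiplications and additions only, the test $\Phi_e(R,x)\le\Delta$ is a comparison, and by Lemma~\ref{lem:monotonic} the search for the maximal feasible $x$ is a monotone scan over ten values, so no bisection or backtracking (which could reintroduce division) is needed. The remaining updates $\Delta \gets \Delta - \Phi_e(R,x)$ and $R \gets 10R + x$ are a subtraction and a shift-plus-addition, and the terminal test $\Delta = 0$ is a comparison. For the combined algorithm, the refinement step of Lemma~\ref{lem:preservation-interval} picks widths $\delta_{k+1}$ equal to negative powers of $10$, so representing $L_k,U_k$ as integers scaled by a common power of $10$ turns the powers $(L_k+j\delta_{k+1})^e$ and the comparisons against $N$ into the same repertoire of scaled-integer multiplications, additions, shifts, and comparisons.

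The step I expect to be the \emph{main obstacle} is making the ``no general division'' claim airtight: convincing the reader that the two places where a division appears to lurk — the base-$10^e$ block extraction and the integer-root initialization — genuinely reduce to shifts, bounded search, and table lookup, and that the $e$-th powers inside $\Phi_e$ are computed purely multiplicatively (Horner-style or by repeated squaring) with no intermediate division or normalization. A secondary point to state carefully is the precise meaning of ``multiplication by small constants'': the coefficients $\binom{e}{j}$ and the digit powers $x^j$ are bounded once $e$ is fixed, whereas the powers $(10R)^{e-j}$ grow with $R$, so I would note explicitly that the latter are general multi-digit integer multiplications — consistent with the cost model $C(t)$ used in Theorem~\ref{thm:complexity-combined} — and stress that the substantive content of the theorem is the closure of the operation set under $\{+,-,\times,\text{shift},\text{compare}\}$ together with the strict exclusion of division, floating-point normalization, and transcendental evaluation.
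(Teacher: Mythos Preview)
Your proposal is correct and follows essentially the same approach as the paper: a direct line-by-line audit of the algorithm, classifying each operation into the admissible primitive classes. You are in fact more thorough than the paper's own proof, which treats only the loop body and omits the block decomposition and the initialization $R \gets \lfloor a_0^{1/e}\rfloor$; your observation that the latter reduces to a bounded search over $\{0,\dots,9\}$, and your flag that the powers $(10R)^{e-j}$ are genuine multi-digit multiplications rather than constant multiplications, are both points the paper glosses over.
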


\begin{proof}[Derivation from the Algorithm]
    In each iteration (Algorithm~\ref{alg:perfect-power}):
    \begin{itemize}
        \item Scaling the remainder $\Delta \gets b^e \Delta + a_k$ is a multiplication by the constant $b^e$ followed by addition of the block $a_k$ (both fixed-width for a given base and exponent).
        \item The binomial increment $(bR + x)^e - (bR)^e$ is computed via the expanded sum
        \[
        \sum_{j=1}^{e} \binom{e}{j} (bR)^{e-j} x^j.
        \]
        Since $e$ is fixed at design time, all binomial coefficients $\binom{e}{j}$ are constants, and powers of $x \le (b-1)$ are small. Each term is a constant multiplication followed by addition.
        \item Digit selection searches over $x = 0,\dots,b-1$ (typically $b=10$), requiring at most $10$ evaluations and comparisons.
        \item Updates to $R$ and $\Delta$ are simple additions/subtractions and shifts.
    \end{itemize}
    All operations are integer-based with bounded precision determined by the input size.
\end{proof}

\subsection{Advantages for Digital Hardware}

\begin{itemize}
    \item \textbf{Regular data flow}: Each iteration processes one digit block sequentially, enabling a simple state machine control unit.
    \item \textbf{Constant-time per digit}: With fixed $e$ and base $b=10$, each digit requires a fixed number of arithmetic operations, facilitating pipelined or iterative designs.
    \item \textbf{Low precision requirements}: Intermediate values (remainder, partial root) have bounded bit width proportional to the input size, avoiding dynamic precision scaling.
    \item \textbf{No division}: Unlike Newton--Raphson or restoring division methods, no quotient estimation or restoration steps are needed.
    \item \textbf{Perfect power detection}: The final remainder check $\Delta = 0$ is a simple zero comparison, providing an exact decision signal.
\end{itemize}

\subsection{Comparison with Existing Hardware Root Units}

Traditional hardware square root units (e.g., in Intel/AMD processors) often use variants of the digit-by-digit non-restoring algorithm~\cite{ercegovac2004}, which shares the same remainder invariant structure as the $e=2$ case presented here. These implementations are highly optimised and achieve near-constant time per digit.

For higher-order roots ($e \ge 3$), hardware support is rare. Existing designs typically fall back on:
\begin{itemize}
    \item General CORDIC rotations (slow convergence),
    \item Lookup tables combined with interpolation,
    \item Iterative methods requiring division.
\end{itemize}
The binomial expansion approach offers a direct generalisation of the proven square root hardware technique, maintaining similar control flow and data paths while extending to arbitrary fixed $e$.

\subsection{Practical Implementation Strategies}

\begin{enumerate}
    \item \textbf{Iterative design}: A single arithmetic unit processes one digit per cycle, suitable for area-constrained FPGAs or ASICs.
    \item \textbf{Pipelined design}: Overlap binomial evaluations for multiple trial digits to reduce latency.
    \item \textbf{Parameterised module}: Fix $e$ and base $b$ as generics/parameters (Verilog/VHDL) to generate optimised constant multipliers.
    \item \textbf{On-the-fly binomial computation}: For variable $e$, precompute or generate coefficients dynamically (at higher area cost).
\end{enumerate}

The $O(L)$ space complexity (Section~\ref{sec:complexity}) translates directly to register usage linear in input size, making the design scalable for arbitrary-precision arithmetic units.

In summary, the algorithm provides a clean, division-free, and highly regular
path to hardware realisation of exact integer $e$-th roots and perfect power
testing, extending the classical digit-by-digit square root paradigm to
higher fixed exponents. As established in
Theorem~\ref{thm:hardware-friendly}, the algorithm avoids division and
floating-point operations, enabling efficient implementation using simple
integer datapaths.

\section{FPGA Implementation Example}
\label{sec:fpga}

The digit-by-digit binomial algorithm is highly amenable to FPGA implementation due to its iterative, regular structure and reliance on basic integer operations. Below, we outline a practical parameterised design in VHDL suitable for synthesis on modern FPGAs (e.g., Xilinx Ultrascale+ or Intel Stratix).

\subsection{Design Overview}

The core module is parameterised by:
\begin{itemize}
    \item Base \(b=10\) (decimal arithmetic),
    \item Fixed exponent \(e \ge 2\) (compile-time constant),
    \item Maximum input size \(L_{\max}\) (number of decimal digits of \(N\)).
\end{itemize}

This enables generation of optimised hardware for specific \(e\) (e.g., cube roots with \(e=3\)).

Key components:
\begin{itemize}
    \item \textbf{Input buffer}: Stores digit blocks of \(N\) (padded to full blocks).
    \item \textbf{Registers}: Current partial root \(R\) and remainder \(\Delta\) (big-integer registers of width \(\approx L_{\max} \log_2 10\) bits).
    \item \textbf{Binomial increment unit}: Computes \((10R + x)^e - (10R)^e\) for trial digit \(x\).
    \item \textbf{Digit selector}: Sequential comparator (testing \(x=9\) down to \(0\)).
    \item \textbf{Control FSM}: Manages block iteration.
\end{itemize}

\subsection{Binomial Increment Unit}

With fixed \(e\), the increment
\[
\Phi(x) = \sum_{j=1}^{e} \binom{e}{j} (10R)^{e-j} x^j
\]
uses precomputed constants \(\binom{e}{j}\). A multiplier-adder tree or Horner's method evaluates it efficiently. Examples:
\begin{itemize}
    \item \(e=2\): Reduces to \((20R + x)x\).
    \item \(e=3\): \(\Phi(x) = 30 R^2 x + 300 R x^2 + x^3\).
\end{itemize}

\subsection{VHDL Pseudocode Sketch}
\label{subsec:vhdl-sketch}

A synthesizable VHDL outline for the core module (parameterised for fixed \(e\)):
\begin{verbatim}
    library ieee;
    use ieee.std_logic_1164.all;
    use ieee.numeric_std.all;

    entity nth_root is
    generic (
    E      : positive := 3;                    -- fixed exponent
    L_MAX  : positive := 1024;                 -- max decimal digits of N
    DIGITS : positive := (L_MAX + E - 1) / E   -- number of blocks
    );
    port (
    clk    : in  std_logic;
    rst    : in  std_logic;
    start  : in  std_logic;
    N      : in  std_logic_vector(L_MAX*4-1 downto 0);  -- packed BCD input
    ready  : out std_logic;
    root   : out std_logic_vector((L_MAX+E-1)/E *4 -1 downto 0); -- BCD root
    is_power : out std_logic                    -- '1' if perfect e-th power
    );
    end entity;

    architecture rtl of nth_root is
    type block_array is array(0 to DIGITS-1) of unsigned(4*E-1 downto 0);
    signal blocks     : block_array;
    signal R_reg      : unsigned(L_MAX*4-1 downto 0);
                        -- partial root (approx width)
    signal delta_reg  : unsigned(L_MAX*4+4-1 downto 0);
                        -- remainder (extra bits for safety)
    signal idx        : natural range 0 to DIGITS;

    -- Precomputed binomial coefficients (fixed e)
    -- constant COEFF : coeff_array := (...);
                        -- array of constants for Phi terms

    function compute_phi(R : unsigned; x : natural) return unsigned is
    -- Dedicated combinational logic using Horner's method or tree
    -- (implementation omitted for brevity; uses precomputed powers/constants)
    begin
    ...
    end function;
    begin
    process(clk)
    variable trial_phi : unsigned(delta_reg'range);
    variable x         : natural range 0 to 9;
    begin
    if rising_edge(clk) then
    if rst = '1' then
    idx <= 0; ready <= '0';
    elsif start = '1' then
    -- Load blocks from N (highest first), initialise R/delta from highest block
    ...
    elsif idx < DIGITS then
    -- Bring down next block
    delta_reg <= delta_reg * (10**E) + blocks(idx);
    -- Trial digit selection (sequential, starting from 9)
    x := 9;
    loop
    trial_phi := compute_phi(R_reg, x);
    if trial_phi <= delta_reg then
    R_reg     <= R_reg * 10 + x;
    delta_reg <= delta_reg - trial_phi;
    exit;
    end if;
    x := x - 1;
    end loop;
    idx <= idx + 1;
    else
    ready    <= '1';
    is_power <= '1' when delta_reg = 0 else '0';
    end if;
    end if;
    end process;
    end rtl;
\end{verbatim}
This sketch illustrates the regular control flow and dedicated combinational binomial unit, suitable for synthesis on modern FPGAs.
\subsection{Resource and Performance Estimates}

\begin{table}[H]
    \centering
    \begin{tabular}{lccc}
        \toprule
        Exponent \(e\) & LUTs (approx.) & Registers & Clock cycles per digit \\
        \midrule
        2 (square root) & 5k–10k & 2k–4k & 10–15 \\
        3 (cube root) & 15k–25k & 4k–6k & 20–30 \\
        5 & 40k–60k & 8k–12k & 40–50 \\
        \bottomrule
    \end{tabular}
    \caption{Estimated FPGA resource usage for \(L_{\max}=1024\) digits on a mid-range device (e.g., Xilinx Artix-7). Actual values vary with optimisation.}
    \label{tab:fpga-estimates}
\end{table}

Latency is \(O(L)\) cycles; pipelining the binomial unit reduces per-digit cycles significantly.
\subsection{Summary}
The regular, division-free nature of the algorithm makes it an excellent candidate for FPGA-based arbitrary-precision arithmetic units, extending the efficiency of classical digit-by-digit square root hardware to higher fixed exponents.

\subsection{Conclusion}
The digit-by-digit binomial algorithm admits practical realisation in hardware without sacrificing exactness or correctness guarantees. For small fixed exponents, the resulting FPGA architectures closely resemble optimised classical square root units, while naturally extending to higher-order roots through compile-time specialisation.
\section{Optimization Techniques for the Binomial Increment Unit}
\label{subsec:binomial-opt}

The binomial increment unit, responsible for computing
\[
\Phi(x) = (bR + x)^e - (bR)^e = \sum_{j=1}^{e} \binom{e}{j} (bR)^{e-j} x^j
\]
for trial digits \(x \in \{0,\dots,b-1\}\), constitutes a performance-critical component in potential hardware implementations of the digit-by-digit algorithm. Several optimisation techniques can be employed to reduce area, latency, and power consumption while preserving exactness.

\begin{enumerate}
    \item \textbf{Horner's Method for Polynomial Evaluation.}
    The polynomial \(\Phi(x)\) may be evaluated using Horner's scheme, which minimises the number of multiplications:
    \[
    \Phi(x) = x \left( \binom{e}{1} (bR)^{e-1} + x \left( \binom{e}{2} (bR)^{e-2} + \cdots + x \binom{e}{e} \right) \cdots \right).
    \]
    This requires only \(e-1\) multiplications by \(x\) and \(e\) constant multiplications/additions, significantly reducing multiplier count and critical-path depth.

    \item \textbf{Precomputation of Powers of \(bR\).}
    The powers \((bR)^1, \dots, (bR)^{e-1}\) can be computed once per digit iteration and reused across all trial \(x\). For fixed \(e\), a dedicated chain achieves logarithmic depth.

    \item \textbf{Constant Multiplication Optimisation.}
    Since the binomial coefficients \(\binom{e}{j}\) are known at design time, efficient constant multipliers (shift-and-add, CSD encoding, or lookup tables) can be used.

    \item \textbf{Parallel Trial Evaluation.}
    Multiple candidate digits may be evaluated concurrently, with a priority encoder selecting the largest feasible \(x\), trading area for reduced cycles per digit.

    \item \textbf{Pipelining.}
    The Horner chain can be pipelined to increase clock frequency at the cost of additional registers.

    \item \textbf{Shared Datapaths for Variable Exponents.}
    For runtime-variable \(e\), configurable multiplier arrays and coefficient storage enable flexibility at reduced efficiency.

    \item \textbf{Early Termination.}
    Trial evaluation (starting from \(x = b-1\)) can terminate upon finding a valid digit, yielding average-case speedup with minimal overhead.
\end{enumerate}

\begin{table}[H]
    \centering
    \begin{tabular}{lccc}
        \toprule
        Technique                  & Area Impact & Latency per Digit         & Power Impact \\
        \midrule
        Horner's Method            & Low         & Reduced                   & Low          \\
        Precomputed Powers         & Moderate    & Reduced                   & Moderate     \\
        Constant Multiplication    & Low         & Minor                     & Low          \\
        Parallel Trials            & High        & Significant reduction     & High         \\
        Pipelining                 & Moderate    & Clock frequency increase  & Low          \\
        Early Termination          & Low         & Average-case reduction    & Low          \\
        \bottomrule
    \end{tabular}
    \caption{Relative trade-offs for binomial increment unit optimisations (compared to naïve summation).}
    \label{tab:binomial-opt}
\end{table}

These techniques enable efficient hardware realisations, particularly for small fixed exponents (\(e \le 5\)), achieving throughput and resource usage comparable to dedicated digit-by-digit square root units while retaining generality for higher-order roots.
\section{Parallel Computing Considerations}
\label{subsec:parallel}

The digit-by-digit binomial algorithm, in its standard form (Algorithm~\ref{alg:perfect-power}), is inherently sequential due to strong data dependencies between iterations:
\begin{itemize}
    \item Each digit of the root depends on the partial root \(R_k\) and remainder \(\Delta_k\) computed in the previous step.
    \item The next block cannot be processed until the current digit \(x\) is selected and \(\Delta\) is updated.
\end{itemize}
This sequential chain prevents direct parallelisation across digits, making the algorithm poorly suited as a paradigm for data-parallel or massively parallel computing (e.g., GPU or SIMD architectures).

\begin{remark}[Limited Parallelism Opportunities]
    Within a single iteration, modest parallelism is possible:
    \begin{itemize}
        \item \textbf{Trial digit evaluation}: The binomial increment \(\Phi(x)\) for different \(x \in \{0,\dots,9\}\) can be computed in parallel (up to 10-way parallelism).
        \item \textbf{Binomial term computation}: Terms in the expanded sum can be evaluated concurrently.
        \item \textbf{Pipelining}: Different stages (power computation, constant multiplication, accumulation) can be pipelined.
    \end{itemize}
    These yield only constant-factor speedups (\(O(1)\)) and do not change the overall \(O(L)\) sequential steps.
\end{remark}

\begin{remark}[Comparison with Parallel-Friendly Methods]
    For large-scale parallel computing, alternative root-finding methods are far more suitable:
    \begin{itemize}
        \item \textbf{Newton--Raphson}: Each iteration can be parallelised across digits or using parallel multiplication (e.g., via FFT-based Schönhage--Strassen).
        \item \textbf{Binary search with parallel powering}: Powering can exploit parallel reduction trees.
        \item \textbf{High-radix methods}: Process multiple digits per step, reducing iteration count at the cost of more complex digit selection.
    \end{itemize}
    These achieve sub-linear iteration counts and better parallel scalability.
\end{remark}

\begin{remark}[Niche Parallel Applications]
    The algorithm can serve as a paradigm in specific parallel contexts:
    \begin{itemize}
        \item \textbf{Multiple independent roots}: Compute \( \lfloor N_i^{1/e} \rfloor \) for many independent \(N_i\) (e.g., batch perfect power testing in cryptography) — embarrassingly parallel across instances.
        \item \textbf{Distributed verification}: Each digit block processed on a separate node with synchronisation (inefficient but possible).
    \end{itemize}
\end{remark}

In summary, while the digit-by-digit binomial algorithm excels in simplicity, exactness, hardware efficiency, and constructive guarantees, it does not serve as an effective paradigm for general parallel computing due to its sequential dependency chain. Its strength lies in sequential, resource-constrained, or exact-arithmetic settings rather than high-performance parallel environments.
\section{Digit-by-Digit Algorithms in Cryptographic Validation}
\label{sec:crypto-applications}

The digit-by-digit algorithms developed in the preceding sections are not intended as cryptographic primitives. Rather, their relevance to cryptography arises in \emph{validation, preprocessing, and correctness-critical arithmetic}, where exact integer computation and provable guarantees are required. In this section, we outline cryptographic contexts in which the proposed methods provide demonstrable utility, while remaining consistent with established cryptographic practice.
\subsection{Perfect Power Detection in Primality Testing and Key Generation}

In cryptographic key generation, it is standard practice to reject candidate parameters that exhibit structural weaknesses. One such weakness arises when an integer is a perfect \(e\)-th power for some \(e \ge 2\), as such values admit trivial root extraction and may invalidate security assumptions~\cite{bernstein1998detecting}.

The generalized digit-by-digit algorithm (Algorithm~\ref{alg:perfect-power}, Theorem~\ref{thm:perfect-power-detection}) provides an exact method for detecting perfect powers using only integer arithmetic. By maintaining an inductive remainder invariant, it determines whether \(N = R^e\) with certainty, without floating-point or approximate tests.

Perfect power detection is employed as a preprocessing step in primality testing and factorization workflows (e.g., Number Field Sieve) to eliminate degenerate inputs~\cite{perret2003square}. The exactness guarantees from the remainder invariant (Lemma~\ref{lem:invariant-general}) ensure no precision-related errors.

While production libraries use asymptotically faster methods for this rare check~\cite{bernstein1998detecting,brent2010}, the digit-by-digit approach serves as a transparent, provably correct alternative in educational or high-assurance settings.

\subsection{RSA Modulus Validation}

In the RSA cryptosystem~\cite{rivest1978method}, the public modulus \(N = p q\) must not be a perfect \(n\)-th power for any \(n \ge 2\). Although correctly generated semiprime moduli satisfy this property by construction, explicit validation is recommended in high-assurance settings to detect malformed or adversarially supplied parameters~\cite{joye2007perfect}.

If an RSA modulus were of the form \(N = R^n\), the integer root \(R\) could be extracted in polynomial time, rendering the modulus insecure. The digit-by-digit perfect power detection algorithm provides a simple and exact mechanism for identifying such pathological cases. Testing small exponents \(n\) up to a conservative bound (e.g., \(n \le \log_2 N\)) suffices in practice, as larger exponents imply impractically small roots.

This validation approach is consistent with techniques employed in cryptographic libraries and standards and benefits from the invariant-based correctness proofs developed in this work. In contexts prioritising verifiable exactness over performance, the digit-by-digit method offers a reliable implementation-independent solution.

\subsection{Exact Arithmetic in Verified Cryptographic Implementations}

Beyond parameter validation, the digit-by-digit framework is well suited to cryptographic implementations that emphasise formal verification and exact arithmetic. Because the algorithms operate exclusively on integers and rational approximations with explicit error bounds, they avoid the semantic and platform-dependent complexities associated with floating-point arithmetic.

In verified cryptographic software and hardware, where correctness proofs are required (e.g., in Coq or similar proof assistants), the inductive invariants established for the digit-by-digit algorithms facilitate machine-checked reasoning about correctness. This makes the methods particularly suitable for big-integer arithmetic kernels, preprocessing routines, and validation layers in high-assurance cryptographic systems.

\section{Scope and Limitations of the Proposed Method}
\label{sec:scope}

The methods presented in this paper address the problem of exact integer
$e$-th root extraction and perfect power detection using a digit-by-digit,
binomially derived framework. The scope of applicability is intentionally
restricted to settings in which correctness, determinism, and verifiability
are prioritised over asymptotic speed or floating-point efficiency.

\paragraph{Integer roots.}
For any fixed integer $e \ge 2$ and non-negative integer input $N$, the
algorithm computes the exact value $\lfloor N^{1/e} \rfloor$ using only
integer arithmetic operations (Algorithm~\ref{alg:perfect-power}).
Correctness is guaranteed by an explicitly maintained remainder invariant
(Lemma~\ref{lem:invariant-general}), which is preserved inductively at each
digit-selection step. Termination is finite and proportional to the number
of digit blocks of $N$. As a direct consequence, the method yields an exact
and decidable test for whether $N$ is a perfect $e$-th power
(Theorem~\ref{thm:perfect-power-detection}).

\paragraph{Fractional roots.}
The scope of fractional root computation depends critically on the order
$e$. For square roots ($e=2$), the binomial increment is linear in the trial
digit, enabling the extraction of exact, non-revisable fractional digits
with sharp truncation error bounds (Theorems~\ref{thm:exact-digits}
and~\ref{thm:error-bound}). For higher-order roots ($e \ge 3$), this digit
stability property does not hold in general
(Corollary~\ref{cor:unique-fractional}). Accordingly, the method supports
fractional root computation only via interval refinement, providing
approximations with explicit and provable error bounds
(Theorem~\ref{thm:fractional-nth-root}), and does not claim exact fractional
digit extraction in this regime.

\paragraph{Algorithmic characteristics.}
The algorithm is inherently sequential, with each digit depending on the
previous partial root and remainder. Its time and space complexities are
linear in the size of the input, with constant-bounded digit search at each
iteration. The method is therefore well suited to sequential and
resource-constrained environments, but is not intended as a paradigm for
massively parallel or high-throughput numerical computation.

\paragraph{Implementation scope.}
Because the algorithm relies exclusively on addition, subtraction,
multiplication by constants, digit shifts, and comparisons, it is well
suited to implementation in arbitrary-precision software libraries and
digital hardware such as FPGAs. The explicit invariant-based structure
further renders the method amenable to mechanised formal verification,
although no machine-checked proofs are claimed in this work.

\paragraph{Non-goals.}
The proposed framework does not aim to supersede asymptotically faster
numerical methods (e.g., Newton--Raphson iteration) in floating-point
contexts, nor does it provide exact fractional digits for general
$e \ge 3$. These limitations are intrinsic to the algebraic structure of
higher-order roots and are explicitly acknowledged.

\medskip
\noindent
Within these clearly defined bounds, the method provides a unified, exact,
and constructive approach to integer root extraction and perfect power
detection, with carefully delimited extensions to fractional approximation
where such extensions are mathematically justified.

\appendix
\section{Worked Examples}
\label{app:examples}

This appendix provides step-by-step numerical illustrations of the digit-by-digit algorithms for selected values of \(N\) and exponents \(e\). These examples demonstrate the block processing, binomial increment computation, digit selection, and remainder evolution, complementing the theoretical proofs.

\subsection{Integer Square Root (\(e=2\)): \(N = 12321\) (Perfect Square)}
\(N = 12321 = 111^2\). Digit pairs (highest first, padded): 01, 23, 21.

\begin{table}[H]
    \centering
    \begin{tabular}{l c c c c c c}
        \toprule
        Step & Block & Scaled $\Delta'$ & Trial $x$ & $(20R+x)x$ & New $\Delta$ & $R$ \\
        \midrule
        Init & 01 & 1 & 1 & 1 & 0 & 1 \\
        1 & 23 & 2300 & 11 & 2321 & 79 & 11 \\
        2 & 21 & 7921 & 1 & 7921 & 0 & 111 \\
        \bottomrule
    \end{tabular}
    \caption{Computation of $\sqrt{12321}=111$ (perfect square, final $\Delta=0$).}
\end{table}

\subsection{Fractional Square Root (\(e=2\)): \(N=2\), \(k=10\) Digits}
Approximation of \(\sqrt{2} \approx 1.4142135623\).

Initial: \(R_0 = 1\), \(\Delta_0 = 1\).

Selected steps (showing non-revision of digits):

\begin{table}[H]
    \centering
    \begin{tabular}{c c c c}
        \toprule
        Digit & Scaled \(\Delta'\) & \(x\) & New \(R\) (prefix) \\
        \midrule
        1 & 400 & 4 & 14 \\
        2 & 2400 & 1 & 141 \\
        3 & 6100 & 4 & 1414 \\
        4 & 9600 & 2 & 14142 \\
        ... & ... & ... & ... \\
        10 & ... & 3 & 1414213562 \\
        \bottomrule
    \end{tabular}
    \caption{First 10 fractional digits of \(\sqrt{2}\). Each digit is final upon computation.}
\end{table}

\subsection{Cube Root (\(e=3\)): \(N = 125\) (Perfect Cube)}
\(N = 125 = 5^3\). Blocks of size 3 (padded): 000125.

Initial block 000: \(R_0 = 0\), \(\Delta_0 = 0\).

Next block 125: \(\Delta' = 125\), find largest \(x\) s.t. \((30R + 10x + x^2)x \le 125\) → \(x=5\).

Remainder = 0 → perfect cube, \(R=5\).

\subsection{Cube Root (\(e=3\)): \(N = 126\)}
Similar steps yield \(R=5\), final \(\Delta > 0\) → not perfect cube.

These examples illustrate the invariant preservation, maximal digit choice, and perfect power detection via zero remainder.
\subsection{Computation of the Integer 5th Root}
\label{app:5th-root}

To illustrate the general digit-by-digit binomial algorithm for higher exponents, consider the computation of the integer 5th root of \(N = 3200000\) (note that \(20^5 = 3200000\), a perfect 5th power, so the expected root is \(R = 20\) and final remainder \(\Delta = 0\)).

The decimal digits of \(N\) are grouped into blocks of size \(e = 5\) (padded with leading zeros): 003200000 (blocks: 00320, 00000 — highest first).

The algorithm processes blocks from highest to lowest.

\begin{table}[H]
    \centering
    \begin{tabular}{l c c c c c c}
        \toprule
        Step & Block & Scaled \(\Delta'\) & Trial \(x\) & \(\Phi_5(R, x)\) & New \(\Delta\) & Partial \(R\) \\
        \midrule
        Init & 00320 & 320 & 0 & 0 & 320 & 0 \\
        1 & 00000 & 3200000000 & 2 & 3200000000 & 0 & 2 \\
        2 & (final) & 0 & - & - & 0 & 20 \\
        \bottomrule
    \end{tabular}
    \caption{Step-by-step computation for \(N = 3200000\), \(e=5\). \(\Phi_5(R, x)\) computation uses the binomial expansion (omitted for brevity in higher steps; all trials confirm maximal \(x=0\) for second block). Final \(\Delta = 0\) confirms perfect 5th power.}
    \label{tab:5th-root-example}
\end{table}

This example demonstrates:
- Block padding and scaling by \(10^5\).
- Binomial increment evaluation for digit selection.
- Zero final remainder for perfect powers.
- For non-perfect \(N = 3200001\), the final \(\Delta > 0\), yielding \(R = 20 = \lfloor 3200001^{1/5} \rfloor\).

Such examples highlight the algorithm's exactness and perfect power detection capability for arbitrary fixed \(e\).
Placement

\subsection{Fractional 7th Root Approximation}
\label{app:7th-root-fractional}

To illustrate the combined algorithm for higher-order roots (\(e=7\)), consider \(N = 2\). The exact integer 7th root is \(\lfloor 2^{1/7} \rfloor = 1\), and we compute 10 fractional decimal digits using interval refinement (since e=7 ≥3, exact digit-by-digit fractional extraction is not possible; see Corollary~\ref{cor:unique-fractional}).

Phase I (Exact Integer 7th Root):
The digit-by-digit binomial algorithm yields \(R = 1\) (single block, remainder >0).

Phase II (Interval Refinement for k=10 Digits):
Start with \(L_0 = 1\), \(U_0 = 2\), \(\delta_0 = 1\).
Refine by successively adding \(\delta_j = 10^{-j}\) (j=1 to 10), testing \((L + \delta_j)^7 \le 2\).

Selected steps (approximation builds as 1.10408951367...):

\begin{table}[H]
    \centering
    \begin{tabular}{c c c c}
        \toprule
        Digit j & \(\delta_j\) & Trials (m) & New L (prefix) \\
        \midrule
        1 & 0.1 & 1 & 1.1 \\
        2 & 0.01 & 0 & 1.10 \\
        3 & 0.001 & 4 & 1.104 \\
        4 & 0.0001 & 0 & 1.1040 \\
        5 & 0.00001 & 8 & 1.10408 \\
        6 & 0.000001 & 9 & 1.104089 \\
        7 & 0.0000001 & 5 & 1.1040895 \\
        8 & 0.00000001 & 1 & 1.10408951 \\
        9 & 0.000000001 & 3 & 1.104089513 \\
        10 & 0.0000000001 & 6 & 1.1040895136 \\
        \bottomrule
    \end{tabular}
    \caption{Refinement steps for fractional digits of \(2^{1/7}\). Final \(\hat{x} \approx 1.10408951367\), error < \(10^{-10}\).}
    \label{tab:7th-root-fractional}
\end{table}

This example demonstrates:
- Exact integer root from Phase I.
- Bounded fractional approximation via monotone refinement.
- No digit finality (unlike e=2) — each digit may require trials.

The actual value \(2^{1/7} \approx 1.104089513673812337...\), confirming the bound.

\section*{Conclusion}
\label{sec:conclusion}

The digit-by-digit binomial method, with its rigorous invariant-based proofs, establishes exact computation of integer \(e\)-th roots and perfect power detection for arbitrary fixed exponents \(e \ge 2\). This exactness, achieved solely through integer arithmetic and remainder preservation, distinguishes it from approximate methods and makes it particularly valuable for verified computation, formal reasoning, educational purposes, and theoretical analysis.

For square roots (\(e=2\)), the framework further provides exact fractional digit expansion with non-revisable digits and sharp error bounds. The structural uniqueness of this property—stemming from the linear increment form—highlights a fundamental difference from higher-order roots.

While performance-oriented applications may favour asymptotic methods, the simplicity, division-free nature, and constructive guarantees of the digit-by-digit approach render it ideal for high-assurance and pedagogical settings.

\section*{Acknowledgments}
This research received no specific grant from any funding agency in the public, commercial, or not-for-profit sectors.

\section*{Declaration of generative AI in scientific writing}
During the preparation of this work, the author used ChatGPT (OpenAI) only to improve the readability and language of certain passages (e.g., grammar, phrasing, and clarity). After using this tool, the author reviewed and edited the content as needed and takes full responsibility for the content of the publication.

\bibliographystyle{amsplain}
\bibliography{references}

\providecommand{\bysame}{\leavevmode\hbox to3em{\hrulefill}\thinspace}
\providecommand{\MR}{\relax\ifhmode\unskip\space\fi MR }
\providecommand{\MRhref}[2]{%
  \href{http://www.ams.org/mathscinet-getitem?mr=#1}{#2}
}
\providecommand{\href}[2]{#2}
\begin{thebibliography}{10}

\bibitem{bernstein1998detecting}
Daniel~J Bernstein, \emph{Detecting perfect powers in essentially linear time},
  Mathematics of Computation \textbf{67} (1998), no.~223, 1253--1283.

\bibitem{bishop1985}
Errett Bishop and Douglas Bridges, \emph{Constructive analysis},
  Springer-Verlag, 1985.

\bibitem{brent2010}
Richard~P. Brent and Paul Zimmermann, \emph{Modern computer arithmetic},
  Cambridge University Press, 2010.

\bibitem{ercegovac2004}
Milo\v{s}~D. Ercegovac and Tomas Lang, \emph{Digital arithmetic}, Morgan
  Kaufmann, 2004.

\bibitem{halsted1897}
George~Bruce Halsted, \emph{On the foundation and technique of arithmetic},
  Open Court Publishing, 1897.

\bibitem{joye2007perfect}
Marc Joye, \emph{Perfect squares and rsa}, Topics in Cryptology -- CT-RSA 2007,
  Springer, 2007, pp.~311--322.

\bibitem{knuth1997}
Donald~E. Knuth, \emph{The art of computer programming, volume 2: Seminumerical
  algorithms}, 3 ed., Addison-Wesley, 1997.

\bibitem{perret2003square}
Ludovic Perret, \emph{Square root algorithms for the number field sieve}, ...
  (2003).

\bibitem{plofker2009}
Kim Plofker, \emph{Mathematics in india}, Princeton University Press, 2009.

\bibitem{rivest1978method}
Ronald~L. Rivest, Adi Shamir, and Leonard Adleman, \emph{A method for obtaining
  digital signatures and public-key cryptosystems}, Communications of the ACM
  \textbf{21} (1978), no.~2, 120--126.

\end{thebibliography}


\end{document}